
\documentclass[letterpaper,10 pt,conference]{ieeeconf}  

\IEEEoverridecommandlockouts                              
\overrideIEEEmargins

\usepackage{amsmath,amsthm}
\usepackage{amsfonts}
\usepackage{dsfont}
\usepackage{graphicx}
\usepackage{color}
\usepackage{cite}
\usepackage{subcaption}
\usepackage{flushend}
\usepackage{graphicx}
\usepackage{bm}
\usepackage{stmaryrd}
\newtheorem{theorem}{Theorem}
\newtheorem{proposition}{Proposition}
\newtheorem{property}{Property}
\newtheorem{lemma}{Lemma}

\newtheorem{definition}{Definition}

\newtheorem{remark}{Remark}
\newcommand{\mathbd}[1]{\mbox{\boldmath $#1$}}

\DeclareMathOperator*{\argmin}{arg\,min}

\def\bmP{\mbox{$\bm{P}$}}
\def\bmC{\mbox{$\bm{C}$}}
\def\bmF{\mbox{$\bm{F}$}}
\def\bmD{\mbox{$\bm{\Delta}$}}
\def\bmT{\mbox{$\bm{T}$}}
\def\bmI{\mbox{$\bm{I}$}}




\title{\LARGE \bf
Stabilization of Cascaded Two-Port Networked Systems Against Nonlinear Perturbations
}

\author{Di Zhao, Sei Zhen Khong, and Li Qiu
\thanks{*This work was supported in part by the Research Grants Council of Hong Kong Special Administrative Region, China, under projects \textcolor{black}{16201115} and T23-701/14N. }
\thanks{D. Zhao and L. Qiu are with the Department of Electronic \& Computer Engineering, The Hong Kong University of Science and Technology, Clear Water Bay, Kowloon, Hong Kong, China. {\tt\small dzhaoaa@ust.hk}, {\tt\small eeqiu@ece.ust.hk}}
\thanks{S. Z. Khong is with the Institute for Mathematics and its Applications, University of Minnesota, Minneapolis, MN 55455, USA. {\tt\small szkhong@umn.edu}}
}
\graphicspath{{./fig/}}

\begin{document}

\maketitle
\thispagestyle{empty}
\pagestyle{empty}

\begin{abstract}

  A networked control system (NCS) consisting of cascaded two-port communication channels between the plant and controller is modeled and
  analyzed. Towards this end, the robust stability of a standard closed-loop system in the presence of conelike perturbations on the system graphs is
  investigated. The underlying geometric insights are then exploited to analyze the two-port NCS. It is shown that the robust stability of the
  two-port NCS can be guaranteed when the nonlinear uncertainties in the transmission matrices are sufficiently small in norm. The stability
  condition, given in the form of ``arcsin'' of the uncertainty bounds, is both necessary and sufficient.

\end{abstract}

\section{Introduction}
Feedback is widely used for handling modeling uncertainties in the area of systems and control. Within a feedback loop, communication between the
plant and controller plays an important role in that the achieved control performance and robustness heavily rely on the quality of communication. In
practice, communication can never be ideal due to the presence of channel distortions and interferences. In this study, we analyze the robust
stability of a feedback system involving bidirectional uncertain communication modeled by cascaded two-port networks.

Most control systems can be regarded as structured networks with signals \textcolor{black}{transmitted} through channels powered by various devices, such as sensors or
satellites. A networked control system (NCS) differs from a standard closed-loop system in that the information is exchanged through a communication
network \cite{zhang2001stability}. The presence of such a network may introduce disturbances to a control system and hence significantly compromise its
performance.

In this study, we introduce an NCS model, extending the standard linear time-invariant (LTI) closed-loop system (Fig.~\ref{figloop}) to the feedback
system with cascaded two-port connections (Fig.~\ref{fignetwork}). We assume that the controller and plant are LTI while the two-port networks involve
nonlinear perturbations on their transmission matrices. In terms of communication uncertainties, we model the transmission matrices as
$\bmT = \bmI + \bmD$, where $\bmD$ is a bounded nonlinear operator. Our formulation of robust stabilization problem is mainly motivated by the
application scenario of stabilizing a feedback system where the plant and controller do not possess an ideal communication environment and their input-output
signals can only be sent through communication networks with several relays, as in, for example, teleoperation systems\cite{anderson1989bilateral},
satellite networks \cite{Alagoz2011POI}, wireless sensor networks \cite{Kumar2014CST} and so on. Moreover, each sub-system between two neighbouring
relays, representing a communication channel, may involve not only multiplicative distortions on the transmitted signal itself but also additive
interferences caused by the signal in the reverse direction, which is usually encountered in a bidirectional wireless network subject to channel fading
or under malicious attacks \cite{wu2007survey}.

Two-port networks are not a new concept and have been studied for decades for different purposes. \textcolor{black}{Historically, two-port networks were
  first introduced in electrical circuits theory \cite{bakshi2009network}.} Later on they were utilized to represent LTI systems in the so-called
chain-scattering formalism \cite{kimura1996chain}, which is essentially a two-port
network. 
Such representations have also been used for studying feedback robustness from the perspective of the $\nu$-gap metric~\cite{Khong2013TAC}.  Recently,
approaches based on the two-port network to modeling communication channels in a networked feedback system is studied in \cite{gu2011cdc} and
\cite{di2016cdc}. There, uncertain two-port connections are used to introduce channel uncertainties, based on which we propose our cascaded two-port
communication model with nonlinear perturbations in this paper.

One of the contributions of our study is a clean result for analyzing the stability of a feedback system with multiple sources of uncertainties. A
general approach to robust stabilization of LTI systems with structured uncertainties is $\mu$ analysis, \textcolor{black}{which is known to be
  computationally intractable in general in the presence of multiple uncertainties \cite{zhou1998essentials}}. Furthermore, the two-port uncertainties
in this study are nonlinear, which bring in an additional obstacle. To overcome these difficulties, we take advantage of the special two-port
structures and make use of geometric insights on system stability via an input-output approach. By generalizing the ``arcsin'' theorem in
\cite{qiu1992feedback} for a standard closed-loop system, we are able to give a concise necessary and sufficient robust stability condition for the
two-port NCS. \textcolor{black}{Moreover, the stability condition is scalable and computationally friendly, in the sense that when the topology of the
  two-port NCS is changed, the stability condition can be efficiently updated based only on the modified components.  In terms of designing an optimal
  controller, it suffices to solve an $\mathcal{H}_\infty$ optimization problem, which is mathematically tractable.}

\begin{figure}
  \centering
  \includegraphics[scale=0.6]{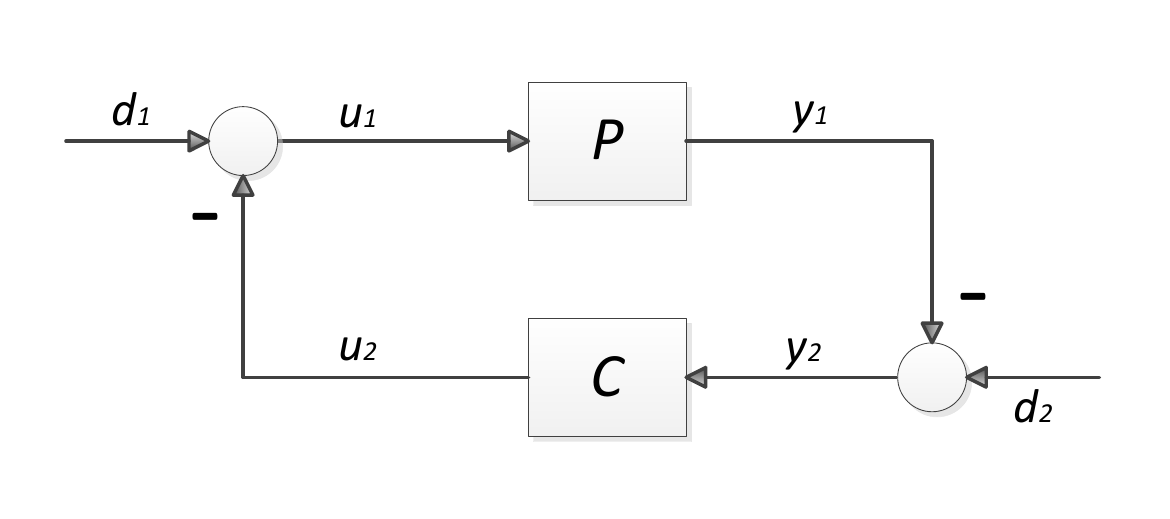}\\
  \caption{A Standard Closed-Loop System}\label{figloop}
\end{figure}

\begin{figure}
  \centering
  \includegraphics[scale=0.58]{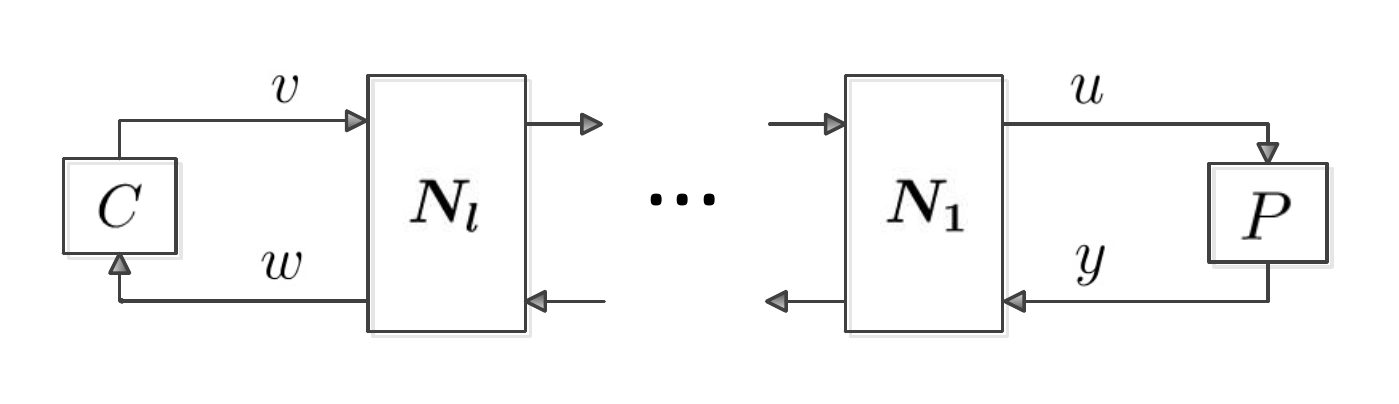}\\
  \caption{Communication Channels Modeled by Cascaded Two-port Networks}\label{fignetwork}
\end{figure}

It is worth noting that there exist previous works on robust stabilization of NCSs with special architectures and various uncertainty
descriptions. For example, \cite{anderson1989bilateral} considers teleoperation of robots through two-port communication networks with time-delay,
\cite{Ishii2010cdc} considers a plant with parametric uncertainties over networks subject to packet loss, \cite{Chesi2016TAC} considers a plant with
polytopic uncertainties in its coefficients over a communication channel subject to fadings and so on. The differences of our work from the previous
ones are that our channel model characterizes bi-directional communication involving both distortions and interferences and these uncertainties may be nonlinear.

The rest of the paper is organized as follows. First in Section \ref{sec:pre}, we define open-loop stability, closed-loop well-posedness and
stability, system uncertainties and some related properties. Then in Section \ref{sec:Closeloop}, we give a robust stability condition for a
closed-loop system with conelike uncertainty descriptions. Thereafter in Section \ref{sec:network}, we extend the results on robust stability to
cascaded two-port networks. In Section \ref{sec:conc}, we conclude this study and summarize our contributions.
%

\section{Preliminaries}\label{sec:pre}
\subsection{Open-loop Stability}


Let $\mathcal{H}_2^n
:= \{f : [0, \infty) \to \mathbb{R}^n\ |\ \|f\|_2^2 := \int_0^\infty |f(t)|^2 \, dt < \infty\}$, where
$|\cdot|$
denotes the Euclidean norm. Let $\mathcal{RH}_\infty$
consist of all the real rational members of $\mathcal{H}_\infty$,
the Hardy $\infty$-space of functions that are holomorphic on the right-half complex plane.

Denote the time truncation operator at time $\tau\in [0,\infty)$ as $\bm{T}_\tau$, such that for $u(t) \in \mathcal{H}_2$, $$(\bm{T}_\tau u)(t) = \left\{
\begin{array}{ll}
  u(t), & \hbox{$0\leq t<\tau$;} \\
  0, & \hbox{Otherwise.}
\end{array}
\right.
$$

A nonlinear system is represented by an operator $\bm{P}: \text{dom}(\bmP) \subset \mathcal{H}_2 \mapsto \mathcal{H}_2$ with domain
$\text{dom}(\bmP) = \{u \in \mathcal{H}_2\ |\ \bmP u \in \mathcal{H}_2\}$. We denote its image as $\text{img}(\bmP)$. A physical system should
additionally be causal, which is defined as follows \cite{Willems1971nonlinear}.

\begin{definition}
A nonlinear system $\bm{P}: \emph{dom}(\bmP) \subset \mathcal{H}_2 \mapsto  \mathcal{H}_2$ is said to be causal if for every $\tau \in [0,\infty)$ and $u_1,u_2 \in \emph{dom}(\bmP)$,
$$\bm{T}_\tau u_1 = \bm{T}_\tau u_2 \Rightarrow \bm{T}_\tau \bmP u_1 = \bm{T}_\tau \bmP u_2$$
\end{definition}
We assume $\bmP 0 = 0$ throughout this study, which means every nonlinear system we consider has zero output whenever the input is zero. The finite-gain
stability of a system is defined as follows \cite{Vidyasagar1993nonlinear}.

\begin{definition}
  A causal nonlinear operator (system) $\bmP$ is said to be (finite-gain) stable if $\emph{dom}(\bmP)=\mathcal{H}_2$ and its operator norm is bounded,
  that is
\begin{align*}
\|\bm{P}\| := \sup_{0 \neq x\in \mathcal{H}_2}\frac{\|\bm{P}x\|_2}{\|x\|_2} < \infty.
\end{align*}
\end{definition}

\subsection{Closed-loop Stability}
We consider a standard closed-loop system in Fig.~\ref{figloop} with plant $\bmP: \text{dom}(\bmP) \subset \mathcal{H}^p_2 \mapsto \mathcal{H}^m_2$
and controller $\bmC: \text{dom}(\bmC) \subset \mathcal{H}^m_2\mapsto \mathcal{H}^p_2$. In the following, the superscripts of $\mathcal{H}^m_2$ and
$\mathcal{H}^p_2$ will be omitted for notational simplicity.

The graph of $\bmP$ is defined as
$$\mathcal{G}_{\bm{P}} = \begin{bmatrix}
\bmI \\
\bmP \\
\end{bmatrix}\text{dom}(\bmP)$$
and similarly the inverse graph of $\bmC$ is defined as
$$\mathcal{G}'_{\bm{C}} = \begin{bmatrix}
\bmC \\
\bmI \\
\end{bmatrix}\text{dom}(\bmC),$$
both of which are assumed to be closed in this study.

It can be seen in \cite{Willems1971nonlinear,Vidyasagar1993nonlinear,SeiZhen_AUCC13} that various versions of feedback well-posedness may be assumed
based on different signal spaces and causality requirements. In this study, we adopt the well-posedness definition from \cite{SeiZhen_AUCC13} without
appealing to extended spaces, by contrast to, for example, \cite{Willems1971nonlinear,Vidyasagar1993nonlinear}.
\begin{definition}\label{def:wellposed}
The closed-loop system $[\bmP,\bmC]$ is said to be well-posed if
\begin{align*}
  \bmF_{\bm{P},\bm{C}}:&~\emph{dom}(\bmP)\times \emph{dom}(\bmC) \mapsto \mathcal{H}_2\\
  &:= \begin{bmatrix}
                u_1 \\
                y_2 \\
              \end{bmatrix} \mapsto \begin{bmatrix}
                d_1 \\
                d_2 \\
              \end{bmatrix} = \begin{bmatrix}
                \bmI & \bmC\\
                \bmP & \bmI \\
              \end{bmatrix}\begin{bmatrix}
                u_1 \\
                y_2 \\
              \end{bmatrix}
\end{align*}
is causally invertible on $\emph{img}(\bmF_{\bm{P},\bm{C}})$.
\end{definition}

Correspondingly, the stability of the closed-loop system is defined as follows:
\begin{definition}\label{def:closed_stable}
A well-posed closed-loop system $[\bmP,\bmC]$ is (finite-gain) stable if
$\bmF_{\bm{P},\bm{C}}$ is surjective and $\bmF^{-1}_{\bm{P},\bm{C}}$ is finite-gain stable.
\end{definition}

When $\bmF_{\bm{P},\bm{C}}$ is surjective, the parallel projection operators \cite{DOYLE199379} along $\mathcal{G}_{\bm{P}}$ and
$\mathcal{G}'_{\bm{C}}$, $\Pi_{\mathcal{G}_{\bm{P}}\sslash\mathcal{G}'_{\bm{C}}}$ and $\Pi_{\mathcal{G}'_{\bm{C}}\sslash\mathcal{G}_{\bm{P}}}$, can be defined
respectively as
\begin{equation}\label{eq:pi_pc}
\begin{aligned}
\Pi_{\mathcal{G}_{\bm{P}}\sslash\mathcal{G}'_{\bm{C}}}&: \begin{bmatrix}
                                                    d_1 \\
                                                    d_2 \\
\end{bmatrix}\in \mathcal{H}_2 \mapsto
\begin{bmatrix}
u_1 \\
y_1 \\
\end{bmatrix}\in \mathcal{G}_{\bm{P}}\\
& = \begin{bmatrix}
      \bmI & 0 \\
      0 & -\bmI \\
    \end{bmatrix}\bmF^{-1}_{\bm{P},\bm{C}}+\begin{bmatrix}
                                             0 & 0 \\
                                             0 & \bmI \\
                                           \end{bmatrix},
\end{aligned}
\end{equation}
\begin{equation}\label{eq:pi_cp}
\begin{aligned}
\Pi_{\mathcal{G}'_{\bm{C}}\sslash\mathcal{G}_{\bm{P}}}&: \begin{bmatrix}
                                                    d_1 \\
                                                    d_2 \\
\end{bmatrix}\in \mathcal{H}_2 \mapsto
\begin{bmatrix}
u_2 \\
y_2 \\
\end{bmatrix}\in \mathcal{G}'_{\bm{C}}\\
& = \begin{bmatrix}
      -\bmI & 0 \\
      0 &\bmI \\
    \end{bmatrix}\bmF^{-1}_{\bm{P},\bm{C}}+\begin{bmatrix}
                                             \bmI & 0 \\
                                             0 & 0 \\
                                           \end{bmatrix}.
\end{aligned}
\end{equation}
It follows that every $w \in \mathcal{H}_2$ has a unique decomposition as $w=m+n$ with $m=\Pi_{\mathcal{G}_{\bm{P}}\sslash\mathcal{G}'_{\bm{C}}}w \in \mathcal{G}_{\bm{P}}$ and $n=\Pi_{\mathcal{G}'_{\bm{C}}\sslash\mathcal{G}_{\bm{P}}}w \in \mathcal{G}'_{\bm{C}}$.

The next proposition bridges the finite-gain stability and the boundedness of parallel projections \cite{DOYLE199379}.
\begin{proposition}\label{prop:pi_pc_stability}
A well-posed closed-loop system $[\bmP,\bmC]$ is stable if and only if $\bmF_{\bm{P},\bm{C}}$ is surjective and $\Pi_{\mathcal{G}_{\bm{P}}\sslash\mathcal{G}'_{\bm{C}}}$ or $\Pi_{\mathcal{G}'_{\bm{C}}\sslash\mathcal{G}_{\bm{P}}}$ is finite-gain stable.
\end{proposition}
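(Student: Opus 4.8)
The plan is to exploit the explicit formulas \eqref{eq:pi_pc}--\eqref{eq:pi_cp}, which already present both parallel projections as affine functions of $\bmF^{-1}_{\bm{P},\bm{C}}$ assembled from constant (hence bounded and causal) block operators. The one structural fact I would isolate first is a preservation principle: finite-gain stability of a causal operator with full domain $\mathcal{H}_2$ is preserved under composition with bounded linear operators and under addition of bounded linear operators. Both claims follow from the elementary estimates $\|LBx\|_2 \le \|L\|\,\|Bx\|_2$ and $\|(B+L)x\|_2 \le \|Bx\|_2 + \|L\|\,\|x\|_2$, and the block matrices $\bigl[\begin{smallmatrix}\bmI & 0\\ 0 & -\bmI\end{smallmatrix}\bigr]$, $\bigl[\begin{smallmatrix}-\bmI & 0\\ 0 & \bmI\end{smallmatrix}\bigr]$ together with the constant projection blocks in \eqref{eq:pi_pc}--\eqref{eq:pi_cp} all have finite norm. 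Surjectivity of $\bmF_{\bm{P},\bm{C}}$ is what makes $\bmF^{-1}_{\bm{P},\bm{C}}$, and thus both projections, defined on all of $\mathcal{H}_2$; since this hypothesis appears on both sides of the equivalence, I carry it along throughout.

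For necessity I would assume the loop is stable, so $\bmF^{-1}_{\bm{P},\bm{C}}$ is finite-gain stable by Definition~\ref{def:closed_stable}. Feeding this into \eqref{eq:pi_pc} and \eqref{eq:pi_cp} and invoking the preservation principle shows at once that \emph{both} projections are finite-gain stable, which is stronger than the disjunction in the statement.

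For sufficiency I would first add \eqref{eq:pi_pc} and \eqref{eq:pi_cp} to record the complementarity identity $\Pi_{\mathcal{G}_{\bm{P}}\sslash\mathcal{G}'_{\bm{C}}} + \Pi_{\mathcal{G}'_{\bm{C}}\sslash\mathcal{G}_{\bm{P}}} = \bmI$, which is exactly the unique decomposition $w = m+n$ noted above; since $\bmI$ is a bounded linear operator, the preservation principle then guarantees that one projection is finite-gain stable if and only if the other is, so the ``or'' in the statement collapses to ``either''. It then remains to recover the inverse: left-multiplying \eqref{eq:pi_pc} by the involution $\bigl[\begin{smallmatrix}\bmI & 0\\ 0 & -\bmI\end{smallmatrix}\bigr]$ and rearranging yields $\bmF^{-1}_{\bm{P},\bm{C}} = \bigl[\begin{smallmatrix}\bmI & 0\\ 0 & -\bmI\end{smallmatrix}\bigr]\bigl(\Pi_{\mathcal{G}_{\bm{P}}\sslash\mathcal{G}'_{\bm{C}}} - \bigl[\begin{smallmatrix}0 & 0\\ 0 & \bmI\end{smallmatrix}\bigr]\bigr)$, whence finite-gain stability of the projection forces finite-gain stability of $\bmF^{-1}_{\bm{P},\bm{C}}$. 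Combined with the assumed surjectivity, this is precisely Definition~\ref{def:closed_stable}.

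Because every step reduces to the same elementary norm bounds, I do not anticipate a substantive obstacle; the only delicate point is domain bookkeeping --- making sure surjectivity of $\bmF_{\bm{P},\bm{C}}$ is actually used wherever $\bmF^{-1}_{\bm{P},\bm{C}}$ or a projection has to be evaluated on all of $\mathcal{H}_2$, since finite-gain stability is defined only for operators of full domain.
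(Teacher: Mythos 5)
Your proposal is correct, but there is no in-paper proof to compare it against: the paper states this proposition as a known result imported from \cite{DOYLE199379}, so your self-contained derivation is a genuine addition rather than a retracing of the paper's argument. Your route is the natural one given how the paper sets things up, and it works: necessity is immediate from Definition~\ref{def:closed_stable} because \eqref{eq:pi_pc}--\eqref{eq:pi_cp} express both projections as compositions and sums of $\bmF^{-1}_{\bm{P},\bm{C}}$ with unit-norm constant matrices; sufficiency follows because the complementarity identity $\Pi_{\mathcal{G}_{\bm{P}}\sslash\mathcal{G}'_{\bm{C}}}+\Pi_{\mathcal{G}'_{\bm{C}}\sslash\mathcal{G}_{\bm{P}}}=\bmI$ (this is precisely the unique decomposition $w=m+n$ recorded after \eqref{eq:pi_cp}) makes the disjunction symmetric, while applying the involution $\bigl[\begin{smallmatrix}\bmI&0\\0&-\bmI\end{smallmatrix}\bigr]$ to \eqref{eq:pi_pc} exhibits $\bmF^{-1}_{\bm{P},\bm{C}}$ as a bounded affine function of $\Pi_{\mathcal{G}_{\bm{P}}\sslash\mathcal{G}'_{\bm{C}}}$, which together with the assumed surjectivity is exactly Definition~\ref{def:closed_stable}. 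Your argument in fact delivers slightly more than the statement asks: stability forces \emph{both} projections to be finite-gain stable, and finite-gain stability of either projection forces that of the other. The one bookkeeping item you should make explicit is causality: the paper's definition of finite-gain stability applies only to \emph{causal} operators, so you should record that $\bmF^{-1}_{\bm{P},\bm{C}}$ is causal by well-posedness (Definition~\ref{def:wellposed}) and that composition and addition with static gain matrices preserve causality; with that one line added, every operator you declare finite-gain stable is genuinely within the scope of the definition, and the proof is complete.
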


For a finite-gain stable closed-loop system $[\bmP,\bmC]$, its stability margin is defined as $b_{\bm{P},\bm{C}} :=
\|\Pi_{\mathcal{G}_{\bm{P}}\sslash\mathcal{G}'_{\bm{C}}}\|^{-1}$. It is shown in \cite{DOYLE199379} that if either $\bm{P}$ or $\bm{C}$ is linear, then
$b_{\bm{P},\bm{C}} = b_{\bm{C},\bm{P}}$.
%

\subsection{System Uncertainties}

%
%

A well-known method to introduce system uncertainties is through various variants of the ``gap'' or ``aperture'' between system graphs
\cite{vinnicombe2000uncertainty}. In this study, before characterizing the uncertainties in two-port networks, we introduce a useful notion of
neighborhood of a certain nominal system's graph, which may serve as its uncertainty set. Let $\mathcal{M}$ be a manifold in $\mathcal{H}_2$. Define
the conelike neighborhood of $\mathcal{M}$ as
$$\mathcal{S}(\mathcal{M},r) = \{n\in \mathcal{H}_2 : \inf_{0 \neq m\in \mathcal{M}} \frac{\|n-m\|_2}{\|m\|_2}\leq r\}\cup \{0\}.$$

\begin{figure}
  \centering
  \includegraphics[width=.4\textwidth]{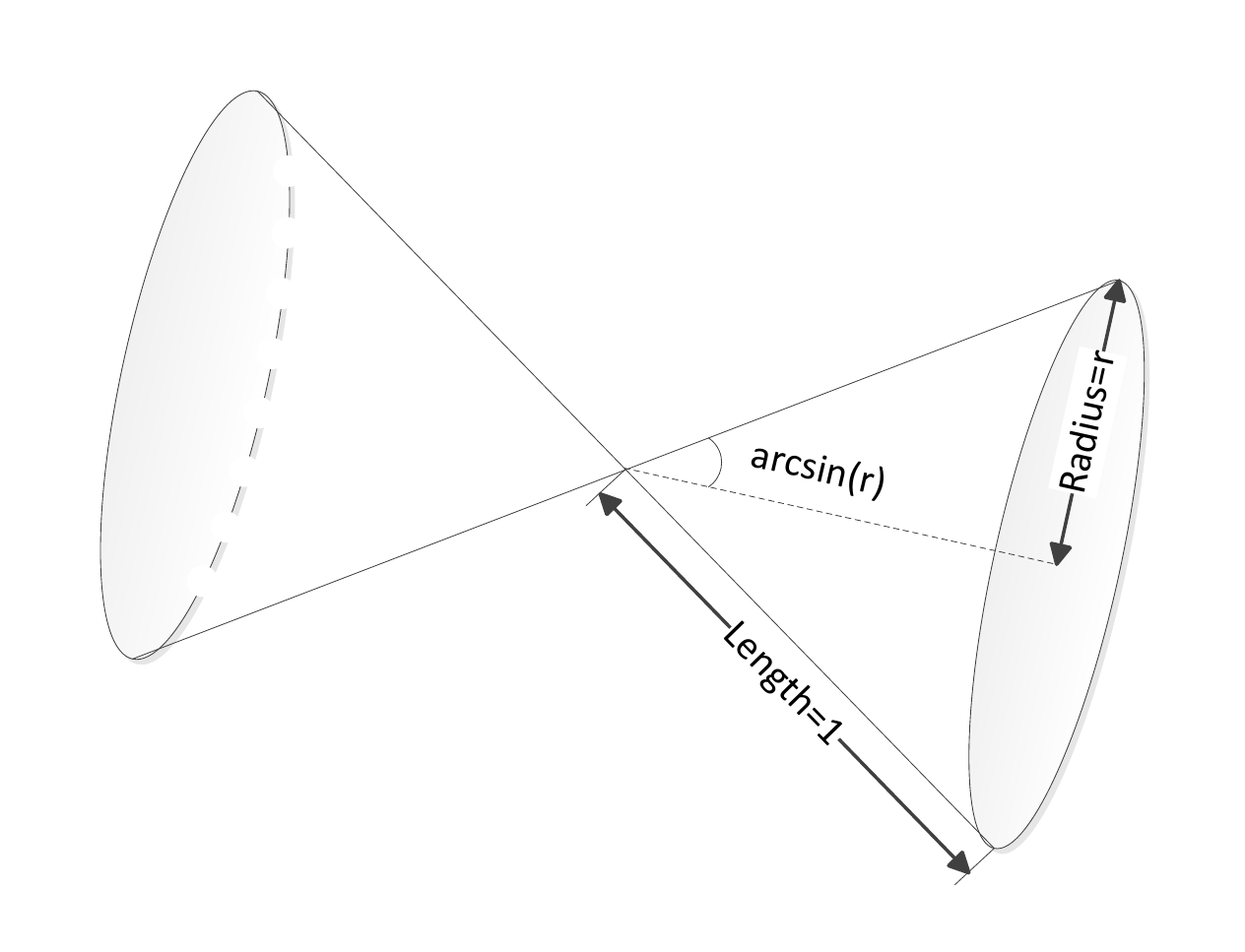}\\
  \caption{The conelike neighborhood in $\mathbb{R}^3$}\label{fig:double_cone}
\end{figure}

If $\mathcal{M}$ is a one-dimensional subspace in $\mathbb{R}^3$, the set $\mathcal{S}(\mathcal{M},r)$ is simply a right circular double cone as shown in Fig.~\ref{fig:double_cone}. In case of one-dimensional subspace in $\mathbb{R}^2$, the set can be interpreted as \textcolor{black}{doubly sector-bounded area \cite{zames1966sector}.} If $\mathcal{M}$ is the graph of certain linear system, it ``resembles'' a closed double cone in the space of $\mathcal{H}_2$, which provides us some geometric intuitions on the system uncertainties.

Based on the Hilbert space structure of $\mathcal{H}_2$, let $\theta(x,y)$ denote the acute angle between
$x,y \in \mathcal{H}_2\setminus \{0\}$ and $\theta(x,y)=\infty$ if either of $x,y$ is zero almost everywhere for convenience.

%

\textcolor{black}{Given $r \in (0,1)$ and a closed conelike neighboring set $\mathcal{M} \subset \mathcal{H}_2$, we have the following useful properties:}
\begin{property}\label{prope1}
\textcolor{black}{Let $ n \in \mathcal{H}_2\setminus \{0\}$. Then $n \in \mathcal{S}(\mathcal{M},r)$ if and only if $\alpha n \in \mathcal{S}(\mathcal{M},r)$ for every $\alpha \in \mathbb{R};$}
\end{property}
\begin{property}\label{prope2}
  $\mathcal{S}(\mathcal{M},r) = \{n\in \mathcal{H}_2 : \inf_{m\in \mathcal{M}} \theta(m,n)\leq \arcsin r\}\cup \{0\}$.
\end{property}
Another related neighboring set is defined as follows:
$$\textcolor{black}{\tilde{\mathcal{S}}(\mathcal{M},r):= \{n\in \mathcal{H}_2\setminus \{0\} : \inf_{m\in \mathcal{M}} \frac{\|n-m\|_2}{\|n\|_2}\leq r\}\cup \{0\}.}$$
\begin{property}\label{prope3}
$\textcolor{black}{\mathcal{S}(\mathcal{M},r) = \tilde{\mathcal{S}}(\mathcal{M},r).}$
\end{property}
\textcolor{black}{The proofs for the above properties are in Appendix \ref{app:pf_properties}, which follow from the definition of conelike
  neighborhoods directly.}
\begin{remark}
  \textcolor{black}{In general, $\tilde{\mathcal{S}}(\mathcal{M},r) \neq {\mathcal{S}}(\mathcal{M},r)$ for arbitrary manifold $\mathcal{M}$ in
    $\mathcal{H}_2$.}
\end{remark}
One benefit of defining uncertainties as above is that we can examine the intersection of two cones simply by studying the angles between two lines
from each of them respectively. Moreover, the intersection of the graphs may reflect the instability of a closed-loop system, as is detailed in the next section.

\section{Feedback interconnections with Conelike Uncertainties}\label{sec:Closeloop}
%

%
%

Given a (possibly unstable) LTI nominal closed-loop system $[P,C]$ with open-loop system graphs $\mathcal{G}_{{P}}$ and $\mathcal{G}'_{{C}}$, we have
the following result concerning its robust stability, whose proof is in Appendix \ref{app:pf_propo2}.

\begin{proposition}\label{prop:nonlinear_stability_criterion}
  Given $r_p,r_c \in (0,1)$, the perturbed system $[\bmP_1,\bmC_1]$ is stable for all
  $\mathcal{G}_{\bm{P}_1}\subset \mathcal{S}(\mathcal{G}_{{P}},r_p),~\mathcal{G}'_{\bm{C}_1}\subset \mathcal{S}(\mathcal{G}'_{{C}},r_c)$ such that $\bmF_{\bm{P}_1,\bm{C}_1}$ is surjective
  if and only if $$\mathcal{S}(\mathcal{G}_{{P}},r_p)\cap \mathcal{S}(\mathcal{G}'_{{C}},r_c) = \{0\}.$$
\end{proposition}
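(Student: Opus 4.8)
The plan is to translate the cone condition into a uniform angular separation between the perturbed graphs and then invoke Proposition~\ref{prop:pi_pc_stability}, which reduces stability to boundedness of a parallel projection. The geometric engine is Property~\ref{prope2}: membership in $\mathcal{S}(\mathcal{M},r)$ is exactly the statement that the infimal acute angle to $\mathcal{M}$ is at most $\arcsin r$. First I would record that $\theta(\cdot,\cdot)$ is a metric on rays, so the triangle inequality $\theta(x,z)\le \theta(x,y)+\theta(y,z)$ is available; together with Property~\ref{prope1} (scale invariance of the cones) this lets me pass freely between vectors and the rays they span.

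For sufficiency I would assume $\mathcal{S}(\mathcal{G}_{P},r_p)\cap \mathcal{S}(\mathcal{G}'_{C},r_c)=\{0\}$ and first show this forces a strict angular gap $\psi := \inf_{m\in\mathcal{G}_{P},\,n\in\mathcal{G}'_{C}}\theta(m,n) > \arcsin r_p+\arcsin r_c$. Indeed, a common nonzero vector in the two cones, together with the triangle inequality, would produce $m\in\mathcal{G}_{P}$ and $n\in\mathcal{G}'_{C}$ with $\theta(m,n)\le \arcsin r_p+\arcsin r_c$, contradicting triviality of the intersection; conversely any near-minimizing pair lets one place an intermediate ray inside both cones, so the gap is exactly the dividing quantity. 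Given any admissible perturbation with $\mathcal{G}_{\bm{P}_1}\subset\mathcal{S}(\mathcal{G}_{P},r_p)$ and $\mathcal{G}'_{\bm{C}_1}\subset\mathcal{S}(\mathcal{G}'_{C},r_c)$, a second application of the triangle inequality yields $\theta(m_1,n_1)\ge \psi-\arcsin r_p-\arcsin r_c>0$ uniformly over $m_1\in\mathcal{G}_{\bm{P}_1}$ and $n_1\in\mathcal{G}'_{\bm{C}_1}$. Writing $\phi:=\psi-\arcsin r_p-\arcsin r_c$, this uniform bound makes the graph intersection trivial (so the loop is well posed and the decomposition along the two graphs unique), and through the elementary estimate $\|m_1+n_1\|_2^2\ge (1-\cos\phi)\|m_1\|_2^2$ it bounds $\Pi_{\mathcal{G}_{\bm{P}_1}\sslash\mathcal{G}'_{\bm{C}_1}}$. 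Since $\bm{F}_{\bm{P}_1,\bm{C}_1}$ is assumed surjective, Proposition~\ref{prop:pi_pc_stability} then delivers stability.

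For necessity I would argue by contraposition: assuming the intersection is nontrivial, I exhibit a single admissible, surjective-$\bm{F}$ perturbation that is unstable. A nonzero $w$ in the intersection lies within $\arcsin r_p$ of some ray of $\mathcal{G}_{P}$ and within $\arcsin r_c$ of some ray of $\mathcal{G}'_{C}$; the goal is to build causal operators $\bm{P}_1,\bm{C}_1$ whose graphs remain inside the respective cones yet both contain $w$, so that $\mathcal{G}_{\bm{P}_1}\cap\mathcal{G}'_{\bm{C}_1}\neq\{0\}$. A shared nonzero vector makes the decomposition non-unique, hence the parallel projection ill defined, so by Proposition~\ref{prop:pi_pc_stability} the loop cannot be stable.

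The hard part will be this last construction: I must realize the geometrically prescribed, cone-contained graphs as graphs of genuine causal operators (using the nonlinearity to perturb the nominal $P,C$ only along the ray through $w$) while simultaneously arranging $\mathcal{G}_{\bm{P}_1}+\mathcal{G}'_{\bm{C}_1}=\mathcal{H}_2$ so that $\bm{F}_{\bm{P}_1,\bm{C}_1}$ is surjective and the perturbation is actually counted by the quantifier. Keeping causality, cone containment, the shared vector, and surjectivity compatible at once is the delicate step, and the boundary case $\psi=\arcsin r_p+\arcsin r_c$, where the two cones only touch, will require the most care.
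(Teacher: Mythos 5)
Your sufficiency argument has a genuine gap at its very first step. From triviality of $\mathcal{S}(\mathcal{G}_{P},r_p)\cap\mathcal{S}(\mathcal{G}'_{C},r_c)$, the ``insert an intermediate ray'' argument only yields the non-strict bound $\psi:=\inf_{m\in\mathcal{G}_{P},\,n\in\mathcal{G}'_{C}}\theta(m,n)\geq \arcsin r_p+\arcsin r_c$: if a pair $m,n$ had angle strictly below the sum you could place a ray in both cones, but nothing rules out the infimum \emph{equaling} the sum without being attained, which is entirely possible for closed sets in the infinite-dimensional space $\mathcal{H}_2$ (closedness does not give minimizers there). In that boundary case your $\phi=\psi-\arcsin r_p-\arcsin r_c$ is zero, the estimate $\|m_1+n_1\|_2^2\geq(1-\cos\phi)\|m_1\|_2^2$ is vacuous, and no uniform bound on $\Pi_{\mathcal{G}_{\bm{P}_1}\sslash\mathcal{G}'_{\bm{C}_1}}$ follows; indeed in that case the two cones contain pairs at arbitrarily small angles, so no uniform gap exists at all. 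You flag the boundary case as requiring ``the most care,'' but it is exactly where your method breaks, and no repair is given. The paper never needs a uniform gap: it argues by contradiction on one fixed admissible perturbation, extracting decomposition sequences $\omega_k=m_k+n_k$ with $\|m_k\|_2/\|\omega_k\|_2\to\infty$, computing directly that $\cos\theta(m_k,n_k)\geq(1-\alpha_k)/(1+\alpha_k)\to 1$ with $\alpha_k=\|\omega_k\|_2/\|m_k\|_2$, and concluding from closedness of the two perturbed graphs that they, hence the cones, intersect nontrivially. (Strictness of the arcsin inequality is established only in Theorem \ref{thm:extend_arcsin}, and there it comes from the LTI ``arcsin'' theorem of \cite{qiu1992feedback}, not from elementary cone geometry; that is the machinery your shortcut is missing.) A further slip: a uniform angular gap does not make ``the decomposition along the two graphs unique'' for nonlinear systems, since uniqueness concerns differences $m-m'=n'-n$ of graph points, and neither the cones nor nonlinear graphs are closed under differences.

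Your necessity argument also diverges from the paper's in a way that matters. You build $\bm{P}_1,\bm{C}_1$ whose graphs \emph{share} the nonzero vector $w$; then $w=w+0=0+w$ gives two decompositions, $\bmF_{\bm{P}_1,\bm{C}_1}$ is not injective, and the loop is not well-posed. Since Definition \ref{def:closed_stable} (and Proposition \ref{prop:pi_pc_stability}) define stability only for well-posed loops, the quantifier ``for all $\ldots$ such that $\bmF_{\bm{P}_1,\bm{C}_1}$ is surjective'' is naturally read as ranging over well-posed interconnections, and under that reading your perturbation falls outside the admissible class and falsifies nothing. The paper's construction is engineered precisely to avoid this: the perturbed graphs contain the interleaved multiples $\{\alpha_k u\}$ and $\{(\beta_k-\alpha_k)u\}$ with $\beta_k=\alpha_t+(\beta_l-\alpha_l)$ if and only if $k=t=l$, so every decomposition stays unique (well-posedness survives), while $\|m_k\|_2/\|\omega_k\|_2=|\alpha_k/\beta_k|\to\infty$ makes the parallel projection unbounded --- a well-posed, surjective, yet unstable member of the admissible family. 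Both you and the paper leave the realization of these sets as graphs of causal operators with surjective $\bmF$ at the level of a sketch, so I do not count that against you; but the shared-vector idea should be replaced by the paper's interleaving idea for the counterexample to be robust.
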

It is known that a standard well-posed LTI closed-loop system $[P,C]$ is stable if and only if
$\mathcal{G}_P \oplus \mathcal{G}'_C = \mathcal{H}_2 \times \mathcal{H}_2$. As there is no subspace representation for the graph of a nonlinear
system, Proposition \ref{prop:nonlinear_stability_criterion} generalizes the geometric insight of complementarity of subspaces. Building on that, we
have the following robust stability condition, which extends the ``arcsin'' inequality condition in \cite{qiu1992feedback} and
\cite{qiu1992pointwise}.
\begin{theorem}\label{thm:extend_arcsin}
Assume the LTI nominal closed-loop system $[P,C]$ is stable. The following statements are equivalent:

 \begin{enumerate}
 \item The perturbed system $[\bmP_1,\bmC_1]$ is stable for all $\mathcal{G}_{\bm{P}_1} \subset \mathcal{S}(\mathcal{G}_{{P}},r_p)$,
   $\mathcal{G}'_{\bm{C}_1} \subset \mathcal{S}(\mathcal{G}'_{{C}},r_c)$ such that $\bmF_{\bm{P}_1,\bm{C}_1}$ is surjective;
   \item $\mathcal{S}(\mathcal{G}_{{P}},r_p)\cap \mathcal{S}(\mathcal{G}'_{{C}},r_c) = \{0\};$
   \item $\arcsin r_p +\arcsin r_c < \arcsin b_{{P},{C}}.$
 \end{enumerate}

\end{theorem}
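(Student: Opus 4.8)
The plan is to lean on Proposition~\ref{prop:nonlinear_stability_criterion} to dispose of the equivalence (1)$\Leftrightarrow$(2) at no cost: its hypothesis is verbatim the assertion of statement (1) and its conclusion is verbatim statement (2), and it does not even require $[P,C]$ to be stable. Thus the entire content of the theorem reduces to proving (2)$\Leftrightarrow$(3). The newly added hypothesis that the nominal loop $[P,C]$ is stable enters only to make $b_{P,C}=\|\Pi_{\mathcal{G}_P\sslash\mathcal{G}'_C}\|^{-1}$ well defined and strictly positive, so that $\arcsin b_{P,C}$ is meaningful; from this point on no systems theory is needed and the problem becomes pure Hilbert-space geometry of the two closed complementary subspaces $\mathcal{G}_P$ and $\mathcal{G}'_C$ and their conelike neighborhoods.

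The geometric keystone I would establish first is the identity
\[ \theta^\ast := \inf_{0\neq m\in\mathcal{G}_P,\,0\neq m'\in\mathcal{G}'_C}\theta(m,m')=\arcsin b_{P,C}. \]
This follows from the standard fact that the norm of the oblique projection onto one closed subspace along a complementary one equals $1/\sin\theta^\ast$, where $\theta^\ast$ is the minimal (Dixmier) angle between the subspaces; together with $b_{P,C}=\|\Pi_{\mathcal{G}_P\sslash\mathcal{G}'_C}\|^{-1}$ this yields $\sin\theta^\ast=b_{P,C}$. I would then use Property~\ref{prope2} to rewrite each cone as an angular neighborhood, $\mathcal{S}(\mathcal{G}_P,r_p)=\{n:\inf_{m\in\mathcal{G}_P}\theta(m,n)\le\arcsin r_p\}\cup\{0\}$ and likewise for $C$, so that statements (2) and (3) are both phrased entirely through the angle $\theta$, which is a metric on rays and in particular obeys the triangle inequality.

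With this dictionary in place, the implication (3)$\Rightarrow$(2) is immediate by contraposition. If a nonzero $n$ lay in both cones, then---since $r_p,r_c<1$ force the orthogonal projections of $n$ onto $\mathcal{G}_P$ and onto $\mathcal{G}'_C$ to be nonzero---those projections supply $m_0\in\mathcal{G}_P$ and $m_0'\in\mathcal{G}'_C$ realizing $\theta(m_0,n)\le\arcsin r_p$ and $\theta(m_0',n)\le\arcsin r_c$, whence $\theta^\ast\le\theta(m_0,m_0')\le\arcsin r_p+\arcsin r_c$, contradicting (3). For the converse (2)$\Rightarrow$(3), again by contraposition, I would suppose $\arcsin r_p+\arcsin r_c\ge\theta^\ast$, take a pair $m^\ast\in\mathcal{G}_P$, $m'^\ast\in\mathcal{G}'_C$ with $\theta(m^\ast,m'^\ast)$ close to $\theta^\ast$, and, working inside the two-dimensional plane they span, select a unit vector $n$ that splits this angle so that $\theta(m^\ast,n)\le\arcsin r_p$ and $\theta(n,m'^\ast)\le\arcsin r_c$ simultaneously; such an $n$ lies in both cones and certifies $\mathcal{S}(\mathcal{G}_P,r_p)\cap\mathcal{S}(\mathcal{G}'_C,r_c)\neq\{0\}$.

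The step I expect to be the main obstacle is this last construction at the boundary $\arcsin r_p+\arcsin r_c=\theta^\ast$. When the inequality is strict the angle-splitting goes through with room to spare, but at equality it demands that the infimum $\theta^\ast$ be \emph{attained}, and for graphs of LTI systems the cosine of $\theta^\ast$ is the norm of a Toeplitz-type operator with rational symbol, which in general is not realized by any single pair of $\mathcal{H}_2$ signals. I would overcome this by passing to the frequency domain, where $\theta^\ast$ equals the infimum of the pointwise angle over the compact extended imaginary axis and is therefore attained at some frequency $\omega_0$, and then manufacturing a genuine witness $n$ in the intersection by a narrow-band approximation concentrated near $\omega_0$, using the closedness of the cones to pass to the limit. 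Pinning down this attainment and controlling the limit is the delicate part; the surrounding angle-chasing is routine.
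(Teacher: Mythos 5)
Your reduction via Proposition~\ref{prop:nonlinear_stability_criterion} and your argument for (3)$\Rightarrow$(2) coincide with the paper's: the paper proves (3)$\Rightarrow$(2) directly with the same triangle inequality for $\theta$ and the same identity $\sin\theta^\ast=b_{P,C}$ (cited from \cite{qiu1992feedback}). The genuine divergence is in the remaining direction. The paper never attempts your direct (2)$\Rightarrow$(3); it closes the cycle by proving (1)$\Rightarrow$(3), invoking the necessity half of the LTI ``arcsin'' theorem of \cite{qiu1992feedback} together with the containment of gap-metric balls in the cones, so that failure of (3) is certified by a destabilizing LTI closed loop inside the cones, not by a nonzero point of $\mathcal{S}(\mathcal{G}_{P},r_p)\cap\mathcal{S}(\mathcal{G}'_{C},r_c)$.

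This divergence is fatal to your plan, because the witness you need at the boundary $\arcsin r_p+\arcsin r_c=\theta^\ast$ does not exist in general, so no refinement of the narrow-band construction can produce it. Indeed, suppose $0\neq n$ lies in both closed cones. Since $\mathcal{G}_{P}$ and $\mathcal{G}'_{C}$ are closed subspaces, the angle infima are attained at the orthogonal projections $\bar m,\bar m'$ of $n$ onto them (nonzero because $r_p,r_c<1$), and the triangle inequality gives $\theta^\ast\le\theta(\bar m,\bar m')\le\theta(\bar m,n)+\theta(n,\bar m')\le\arcsin r_p+\arcsin r_c=\theta^\ast$; hence a nonzero boundary witness exists \emph{only if} the minimal angle $\theta^\ast$ is attained by a pair of genuine $\mathcal{H}_2$ signals. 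For LTI graphs this attainment generically fails: $\sin\theta^\ast=b_{P,C}=\|\Pi_{\mathcal{G}_{P}\sslash\mathcal{G}'_{C}}\|^{-1}$ is the reciprocal of an $\mathcal{H}_\infty$-type operator norm whose pointwise maximum occurs at isolated frequencies, and no $\mathcal{H}_2$ signal concentrates on a single frequency. Concretely, for $P(s)=1/(s+1)$, $C=0$ one gets $b_{P,C}=1/\sqrt{2}$ and $\theta^\ast=\pi/4$ unattained, so for $\arcsin r_p+\arcsin r_c=\pi/4$ the two cones intersect only in $\{0\}$ even though (3) fails. Your limiting device cannot rescue this: the normalized narrow-band signals converge weakly to $0$ and possess no nonzero norm-limit point, while the cones (being non-convex) are closed only in the norm topology, so ``closedness plus passing to the limit'' only ever hands you the zero vector. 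This boundary situation is exactly what the paper outsources to the cited LTI theorem, whose counterexamples are unstable loops rather than cone intersections; your route and the paper's are not interchangeable there, and your example-level analysis above in fact exposes that the equivalence (2)$\Leftrightarrow$(3) is itself delicate precisely at equality, so any purely geometric proof of (2)$\Rightarrow$(3) of the kind you propose must break down.
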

\begin{proof}
  The equivalence between 1) and 2) has been established in Proposition \ref{prop:nonlinear_stability_criterion}.  The direction $1)\Rightarrow 3)$
  follows from the ``arcsin'' theorem in \cite{qiu1992feedback} for LTI systems by noting that standard the gap metric balls
  $\mathcal{B}(\mathcal{G}_{{P}},r_p)$ and $\mathcal{B}(\mathcal{G}_{{P}},r_c)$ are contained in the conelike sets $\mathcal{S}(\mathcal{G}_{{P}},r_p)$
  and $\mathcal{S}(\mathcal{G}'_{{C}},r_c)$, respectively.

  Next we show $3)\Rightarrow 2)$. First note that $b_{{P},{C}} = \inf_{m\in \mathcal{G}_{{P}},n\in \mathcal{G}'_{{C}}} \sin \theta(m,n)$ from
  \cite{qiu1992feedback}. Given any $m_1 \in \mathcal{S}(\mathcal{G}_{{P}},r_p)$ and $n_1 \in \mathcal{S}(\mathcal{G}'_{{C}},r_c)$, the triangle
  inequality for $\theta(\cdot,\cdot)$ gives
\begin{align}\label{ineq:angles1}
\theta(m_1,n_1) \geq \theta(m,n)-\theta(m,m_1)-\theta(n,n_1)
\end{align}
for any $m \in \mathcal{G}_{{P}}$ and $n \in \mathcal{G}'_{{C}}$.

It follows directly that $\inf_{m \in \mathcal{G}_{\bm{P}}} \theta (m,m_1) \leq \arcsin r_p$ and $\inf_{n \in \mathcal{G}'_{\bm{C}}} \theta(n,n_1) \leq \arcsin r_c$
from Property \ref{prope2}. Let $\bar{m}\in \arg\min_{m \in \mathcal{G}_{\bm{P}}} \theta(m,m_1)$ and
$\bar{n}\in \arg\min_{n \in \mathcal{G}'_{\bm{C}}} \theta(n,n_1)$, in which the minimums are achieved due to the closedness of the system graphs. Then
inequality (\ref{ineq:angles1}) implies that
\begin{align*}
  \theta(m_1,n_1) &\geq \theta(\bar{m},\bar{n})-\theta(\bar{m},m_1)-\theta(\bar{n},n_1)\\
  &\geq \theta(\bar{m},\bar{n})-\arcsin r_p -\arcsin r_c\\
  &\geq \arcsin b_{{P},{C}}-\arcsin r_p -\arcsin r_c =: \epsilon > 0.
\end{align*}
Hence it holds that
$$\inf_{m_1 \in \mathcal{S}(\mathcal{G}_{{P}},r_p),n_1 \in \mathcal{S}(\mathcal{G}'_{{C}},r_c)} \arcsin \theta (m_1,n_1) \geq \epsilon > 0,$$
which implies $$\mathcal{S}(\mathcal{G}_{{P}},r_p)\cap \mathcal{S}(\mathcal{G}'_{{C}},r_c) = \{0\},$$ as required.
\end{proof}

A short summary to the above results follows. A certain robust stability condition is derived while allowing simultaneous perturbations on the plant
and controller, in the expression of an ``$\arcsin$'' inequality. The uncertainties are measured with conelike neighborhoods. It is worth noting that
for nonlinear systems, $\delta$-type gaps and $\gamma$-type gaps can be used to characterize the set of all neighboring system graphs within some
radius\cite{vinnicombe2000uncertainty}, which defines a set of manifolds. On the other hand, a conelike neighborhood simply gathers all input-output
pairs of a certain distance from the center, which forms a manifold itself. The advantage of focusing on input-output pairs instead of system graphs
arises in the case where only partial information about the graph of a nonlinear system is available, say in the form of some measured input-output
data set, which may not be sufficient for the purpose of computing the gap-distance, rendering standard gap-type stability conditions inapplicable. On
the contrary, if the uncertainties are measured with respect to the available input-output pairs, it is likely that the limited measured data are
sufficient to give a good approximation of these uncertainties. To verify whether a partially known perturbed system lies within a conelike
neighborhood, it suffices to check every available input-output pair.

\section{Networked Robust Stabilization with Cascaded Nonlinear Uncertainties}\label{sec:network}

\subsection{Two-Port Networks as Communication Channels}

\begin{figure}
    \centering
    \begin{subfigure}[b]{0.24\textwidth}
        \includegraphics[width=\textwidth]{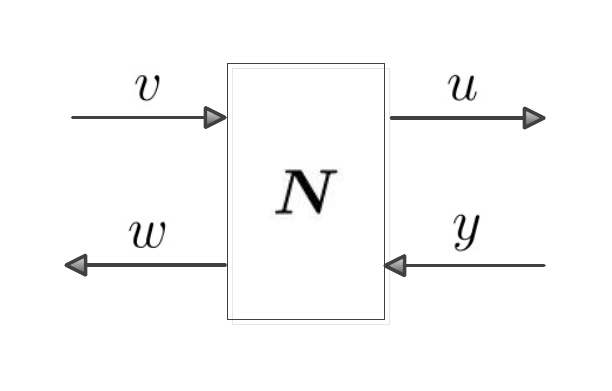}
        \caption{A Single Two-Port Network}
        \label{figtwoport}
    \end{subfigure}
    ~ 

    \begin{subfigure}[b]{0.34\textwidth}
        \includegraphics[width=\textwidth]{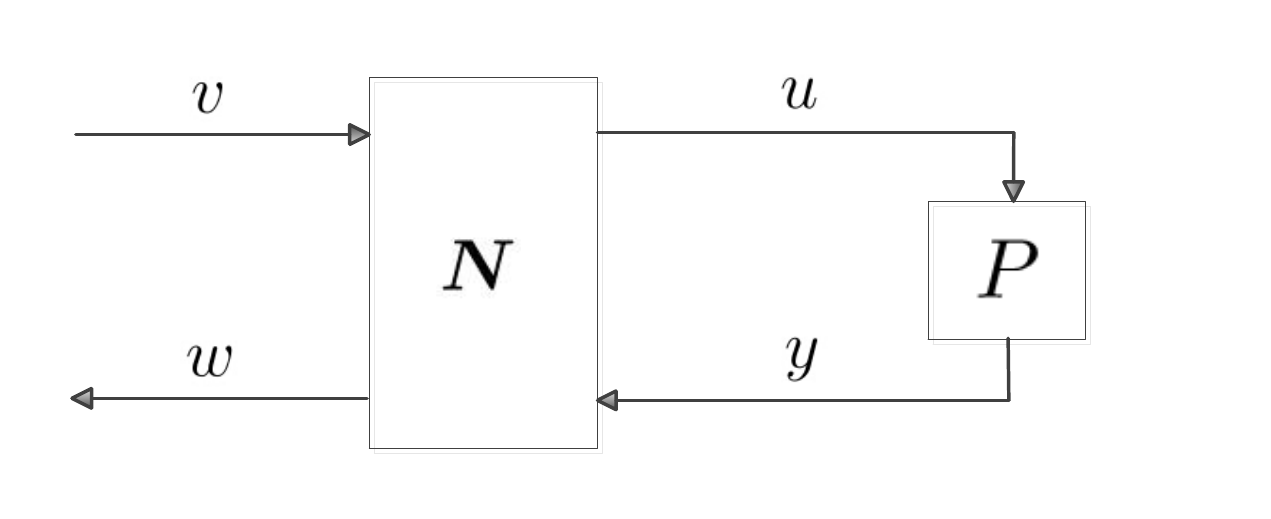}
        \caption{One-stage Two-Port Connection}
        \label{fig:oneblock}
    \end{subfigure}
    \caption{Two-Port Networks: an Illustration}\label{fig:two_port_of2}
    \vspace{-5pt}
\end{figure}

%
The use of two-port networks as a model of communication channels is adopted from \cite{gu2011cdc,di2016cdc}. \textcolor{black}{Two-port networks were
  first introduced and investigated in electrical circuits theory \cite{bakshi2009network}.} The network $\mathbd{N}$ in Fig.~\ref{figtwoport} has two
external ports, with one port composed of $v$, $w$ and the other of $u$, $y$, and is called a two-port network. A two-port network $\mathbd{N}$ may
have various representations, out of which we choose the transmission type to model a communication channel. Define the transmission matrix $\bmT$ as
\begin{align}\label{eq:transmission}
\bmT = \begin{bmatrix}
\bmT_{11} & \bmT_{12} \\
\bmT_{21} & \bmT_{22} \\
\end{bmatrix}~\text{and} ~\begin{bmatrix}
     v \\
     w \\
   \end{bmatrix} =  \bmT\begin{bmatrix}
     u \\
     y \\
   \end{bmatrix}.
\end{align}


When the communication channel is perfect, i.e., communication takes place without distortion or interference, the transmission matrix is simply
$$\bmT = \begin{bmatrix}
    \bmI_m & 0 \\
    0 & \bmI_p \\
    \end{bmatrix}.$$
If the bidirectional channel admits both distortions and interferences, we can let the transmission matrix take the form
$$\bmT = \bmI+\bmD= \begin{bmatrix}
\bmI_m+\bmD_\div & \bmD_- \\
\bmD_+ & \bmI_p+\bmD_\times \\
\end{bmatrix},$$
where $\bmI: \mathcal{H}_2 \mapsto \mathcal{H}_2$ is the identity operator and $$\bmD = \begin{bmatrix}
\bmD_\div & \bmD_- \\
\bmD_+ & \bmD_\times \\
\end{bmatrix}: \mathcal{H}_2 \mapsto \mathcal{H}_2$$
satisfies $\|\bmD\|\leq r<1$, which ensures that $\bmT$ is stably invertible. The four-block matrix $\bmD$ is called the uncertainty
quartet. 

\subsection{Graph Analysis on Cascaded Two-Port NCS}\label{subsec:NCS with Two-Port}

It is well known that graphs symbols can be defined for finite-dimensional LTI systems \cite{zhou1998essentials}. For every LTI system with transfer
function $P$, it admits a right coprime factorization $P = NM^{-1}$ satisfying $X M + Y N = I$, where $M,N,X,Y \in \mathcal{RH}_\infty$. The graph
symbol is defined as $$\begin{bmatrix}
  M \\
  N \\
\end{bmatrix}, ~\text{whereby}~ \mathcal{G}_P = \begin{bmatrix}
M \\
N \\
\end{bmatrix}\mathcal{H}_2;$$
see \cite[Proposition 1.33]{vinnicombe2000uncertainty}.

As illustrated in Fig.~\ref{fignetwork}, the LTI plant $P = NM^{-1}$ and LTI controller $C = VU^{-1}$ communicate with each other through cascaded
two-port networks involving nonlinear perturbations. In particular, one can characterize the input-output pairs in the graph
of $P$ as
$$\begin{bmatrix}
u \\
y \\
\end{bmatrix}
= \begin{bmatrix}
M \\
N \\
\end{bmatrix}x,$$ where $x \in \mathcal{H}_2$.

Consider the transmission type representation of the two-port networks $\{\mathbd{N}_k\}_{k=1}^{l}$.
If the $k$-th stage of the network admits a stable nonlinear uncertainty $\bmD_k$, then the transmission matrix is given as $\bmT_k =
\bmI+\bmD_k$. Signals in Fig.~\ref{fig:equiv_con_pla} have the following relations:
\begin{align*}
  \begin{bmatrix}
    u_k \\
    y_k \\
  \end{bmatrix} = \left(\prod _{j=1}^{k} \bmT_{k+1-j}\right)\begin{bmatrix}
    u \\
    y \\
  \end{bmatrix} = \left(\prod _{j=1}^{k}(\bmI+\bmD_{k+1-j})\right)
  \begin{bmatrix}
    u \\
    y \\
  \end{bmatrix},
\end{align*}
\begin{align*}
  \begin{bmatrix}
    v_{k} \\
    w_{k} \\
  \end{bmatrix} &= \left(\prod _{j=k+1}^{l} \bmT_j^{-1}\right) \begin{bmatrix}
    v \\
    w \\
  \end{bmatrix}= \left(\prod _{j=k+1}^{l}(\bmI+\bmD_{j})^{-1}\right)\begin{bmatrix}
    v \\
    w \\
  \end{bmatrix}.
\end{align*}

\begin{figure}
  \centering
  \includegraphics[scale=0.45]{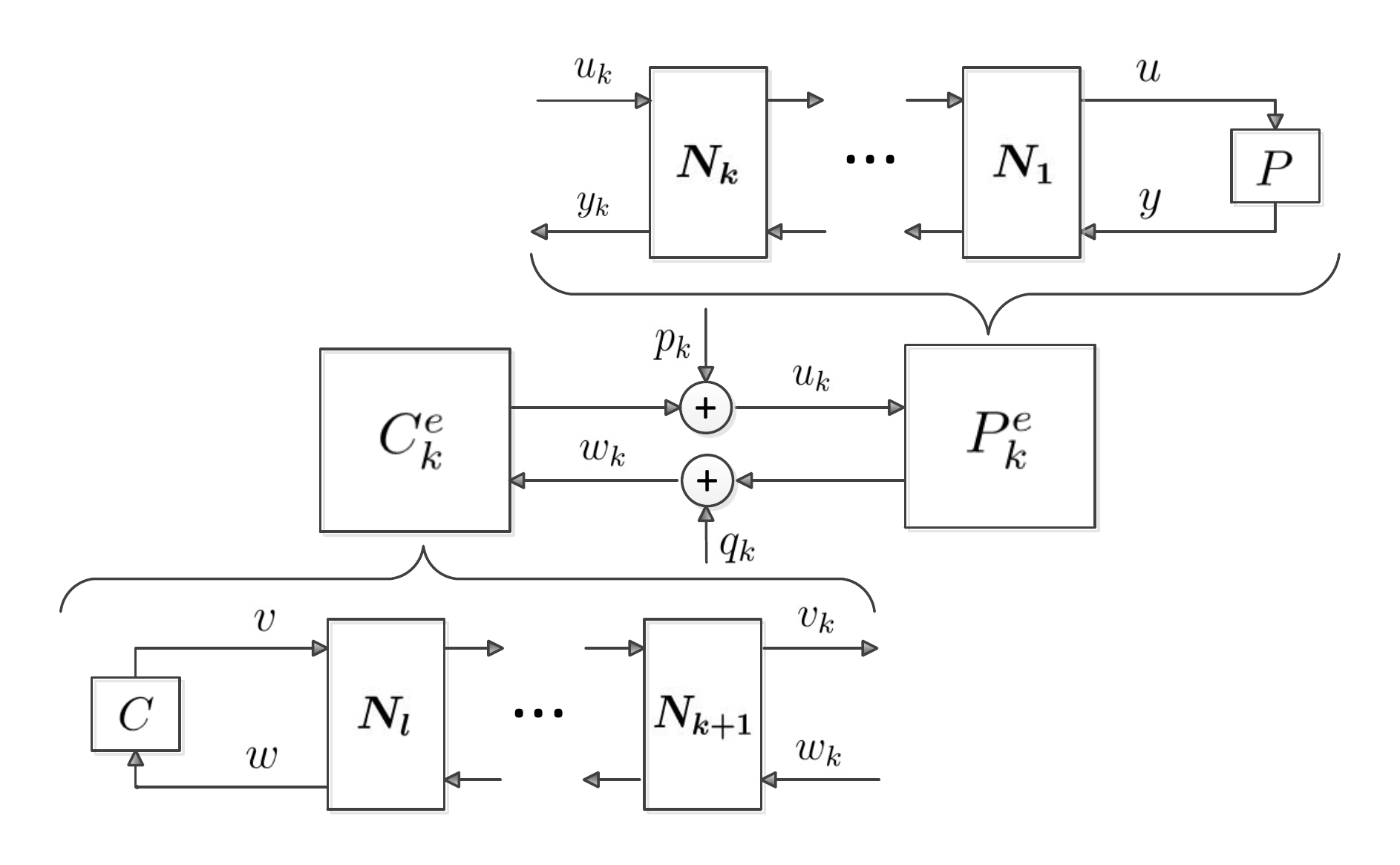}\\
  \caption{Equivalent Plant and Controller}\label{fig:equiv_con_pla}
  \vspace{-5pt}
\end{figure}

If we view $P$ together with $\{\mathbd{N}_j\}_{j=1}^{k}$ as an equivalent plant $\bmP^{e}_k$ with uncertainties $\{\bmD_j\}_{j=1}^{k}$, then the
graph of $\bmP^{e}_k$ is given by
\begin{equation}\label{eq:G_Pk}
\begin{aligned}
\mathcal{G}_{\bm{P}^{e}_k} = \left(\prod _{j=1}^{k}(\bmI+\bmD_{k+1-j})\right)\mathcal{G}_{P}.
\end{aligned}
\end{equation}

Similarly, if we view $C$ together with $\{\mathbd{N}_j\}_{j=k+1}^{l}$ as an equivalent controller $\bmC^{e}_k$ with uncertainties
$\{\bmD_j\}_{j=k+1}^{l}$, then the graph of $\bmC^{e}_k$ is
\begin{equation}\label{eq:G_Ck}
\begin{aligned}
\mathcal{G}'_{\bm{C}^{e}_k} = \left(\prod _{j=k+1}^{l}(\bmI+\bmD_{j})^{-1}\right)\mathcal{G}'_{C}.
\end{aligned}
\end{equation}

For convenience, we regard $k = 0$ as the situation when $P$ is isolated from the two-port networks and $k = l$ when $C$ is isolated.

%

\subsection{Robust Stability Condition}
With the equivalent plant and controller representations derived aforehand, next we extend the definition on the stability of the two-port NCS in
\cite{di2016cdc} to the nonlinear case.

As shown in Fig.~\ref{fig:equiv_con_pla}, we denote the $k$-th input pair as $I_k := [p_k,q_k]^T$, the $k$-th output pair as $O_k:= [u_k,w_k]^T$ and
the set of all outputs as $O := [u_1,w_1,u_2,w_2,\dots,u_l,w_l]^T$. By the feedback well-posedness assumption, the map from input $I_k$ to output $O$
exists and we denote it as $\bm{A}_k: I_k \in \mathcal{H}_2 \mapsto O \in \mathcal{H}_2$.
\begin{definition}
The two-port NCS in Fig.~\ref{fig:equiv_con_pla} is said to be stable if the operator $\bm{A}_k$ is finite-gain stable for every $k=0,1,\dots,l$.
\end{definition}
%
The following proposition further simplifies the stability condition.

\begin{proposition}\label{prop:stability}
The two-port NCS is finite-gain stable if and only if the equivalent closed-loop system $[\bmP^{e}_k,\bmC^{e}_k]$ is finite-gain stable for every $k = 0,1,...,l$.
\end{proposition}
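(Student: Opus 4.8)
The plan is to prove the per-cut equivalence: for each fixed $k$, the operator $\bm{A}_k$ is finite-gain stable if and only if the equivalent closed loop $[\bmP^{e}_k,\bmC^{e}_k]$ is finite-gain stable. Since the two-port NCS is, by definition, stable exactly when every $\bm{A}_k$ ($k=0,1,\dots,l$) is stable, the proposition follows by quantifying this equivalence over all $k$. The structural observation driving the argument is that the closed-loop inverse $\bmF^{-1}_{\bm{P}^{e}_k,\bm{C}^{e}_k}$ is precisely the sub-block of $\bm{A}_k$ that selects the cut pair $O_k$. Writing the feedback map of the equivalent loop as in Definition \ref{def:wellposed}, the plant variable is the signal $u_k$ on which $\bmP^{e}_k$ acts and the controller variable is the signal $w_k$ on which $\bmC^{e}_k$ acts, so $\bmF_{\bm{P}^{e}_k,\bm{C}^{e}_k}:[u_k;w_k]\mapsto[p_k;q_k]=I_k$ and hence $\bmF^{-1}_{\bm{P}^{e}_k,\bm{C}^{e}_k}:I_k\mapsto[u_k;w_k]=O_k$. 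By Definition \ref{def:closed_stable}, together with the well-posedness assumption that furnishes surjectivity of $\bmF_{\bm{P}^{e}_k,\bm{C}^{e}_k}$, stability of $[\bmP^{e}_k,\bmC^{e}_k]$ is therefore equivalent to boundedness of the map $I_k\mapsto O_k$.

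First I would dispatch the easy direction. If $\bm{A}_k$ is finite-gain stable then $I_k\mapsto O$ is bounded, and since $O_k$ is a sub-block of $O$ (for $1\le k\le l$), the map $I_k\mapsto O_k$ is bounded as well, so $[\bmP^{e}_k,\bmC^{e}_k]$ is stable. For the converse, suppose $[\bmP^{e}_k,\bmC^{e}_k]$ is stable, so that $I_k\mapsto O_k=[u_k;w_k]$ is bounded. From the definitional injection relations at the cut, $q_k=\bmP^{e}_k u_k+w_k$ and $p_k=u_k+\bmC^{e}_k w_k$, one recovers the full cut signals $[u_k;y_k]=[u_k;\,q_k-w_k]$ and $[v_k;w_k]=[\,p_k-u_k;\,w_k]$, each a bounded expression in $(O_k,I_k)$ and hence bounded in $I_k$. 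Propagating these through the chain via \eqref{eq:G_Pk}--\eqref{eq:G_Ck}, the plant-side pair at any node $j$ equals a partial product of the $\bmT_i=\bmI+\bmD_i$ (or their inverses) applied to $[u_k;y_k]$, and the controller-side pair at node $j$ is obtained analogously from $[v_k;w_k]$. Because each $\bmT_i$ and each $\bmT_i^{-1}$ is bounded (guaranteed by $\|\bmD_i\|<1$), every component $u_j,w_j$ of $O$ is a bounded image of $(O_k,I_k)$; composing with the boundedness of $I_k\mapsto O_k$ shows $I_k\mapsto O$ is bounded, i.e. $\bm{A}_k$ is finite-gain stable.

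The step requiring the most care, and the main obstacle, is the bookkeeping of the propagation operators together with the two boundary cuts $k=0$ and $k=l$. When $0<k<l$ the pair $O_k$ lies literally inside $O$, but for $k=0$ (plant isolated) the pair $O_0=[u_0;w_0]$ is not among $O_1,\dots,O_l$; there I would instead recover the cut signals from the adjacent output $O_1$, using that node $1$ carries no injection, so its plant-chain and controller-chain values coincide, whence $[u_0;y_0]=\bmT_1^{-1}[u_1;y_1]$ and $[v_0;w_0]=\bmT_1^{-1}[v_1;w_1]$ are bounded in $O_1$; the cut $k=l$ is symmetric. Establishing these partial-product identities precisely, and checking that $\bmF_{\bm{P}^{e}_k,\bm{C}^{e}_k}$ is surjective so that Definition \ref{def:closed_stable} applies, is the technical heart of the argument. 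Once it is in place, combining the two directions with Proposition \ref{prop:pi_pc_stability}, which certifies that boundedness of $\bmF^{-1}$ is equivalent to boundedness of the parallel projections \eqref{eq:pi_pc}--\eqref{eq:pi_cp}, yields the claimed equivalence for every $k$, and the proposition follows.
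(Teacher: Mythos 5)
Your proposal is correct and follows essentially the same route as the paper's proof: the direction from $\bm{A}_k$ to $[\bmP^{e}_k,\bmC^{e}_k]$ is the trivial sub-block observation, and the converse composes the stable closed-loop inverse $\bmF^{-1}_{\bm{P}^{e}_k,\bm{C}^{e}_k}$ with the operators $\bmI+\bmD_j$ and $(\bmI+\bmD_j)^{-1}$, which are stable because $\|\bmD_j\|<1$. Your extra bookkeeping (the boundary cuts $k=0,l$ and the surjectivity of $\bmF_{\bm{P}^{e}_k,\bm{C}^{e}_k}$) only spells out details that the paper dispatches with ``necessity holds trivially'' and its standing well-posedness assumption.
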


\begin{proof}
Necessity holds trivially. Below we show sufficiency.

Let $[\bmP^{e}_k,\bmC^{e}_k]$ be finite-gain stable, and thus $\bmF^{-1}_{\bm{P^{e}_k},\bm{C^{e}_k}}$ is stable. As $\|\bmD\| < 1$ by hypothesis,
both $\bmI+\bmD_j$ and $(\bmI+\bmD_j)^{-1}$ are stable. Hence the composite map of $\bmF^{-1}_{\bm{P^{e}_k},\bm{C^{e}_k}}$ and $\bmI+\bmD_j$ or
$(\bmI+\bmD_j)^{-1}$ is stable, which implies the stability of $\bm{A}_k$ for all $k = 0, 1, \ldots l$.
\end{proof}
With the stability definition at hand, we present next the main robust stability  theorem involving nonlinear perturbations in a two-port NCS.

In the following we assume that every closed-loop system $[\bmP,\bmC]$ is well-posed and $\bmF_{\bm{P},\bm{C}}$ is surjective.
Hence from Proposition \ref{prop:pi_pc_stability}, the stability of $[\bmP,\bmC]$ is equivalent to the finite-gain stability of $\Pi_{\mathcal{G}_{\bm{P}}\sslash\mathcal{G}'_{\bm{C}}}$.
Let nominal LTI closed-loop system $[P,C]$ be stable.
\begin{theorem}\label{thm:main}
\textcolor{black}{The two-port NCS is finite-gain stable for all $\{\bm{\Delta}_k\}_{k=1}^{l}$ subject to $\|\bm{\Delta}_k\| \leq r_k$ if and only if}
\begin{align}\label{eq:two-port_arcsin}
\sum _{k=1}^{l} \arcsin r_k < \arcsin b_{P,C}.
\end{align}

\end{theorem}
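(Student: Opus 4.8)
The plan is to reduce the networked stability problem to the closed-loop robust stability result of Theorem~\ref{thm:extend_arcsin}, stage by stage, via Proposition~\ref{prop:stability}. By that proposition it suffices to characterize when every equivalent closed-loop system $[\bm{P}^{e}_k,\bm{C}^{e}_k]$, $k=0,1,\dots,l$, is finite-gain stable. Since the graphs of these equivalent systems are the multiplicative perturbations \eqref{eq:G_Pk} and \eqref{eq:G_Ck} of the nominal graphs $\mathcal{G}_P$ and $\mathcal{G}'_C$, the whole argument rests on controlling how a composition of factors $\bm{I}+\bm{\Delta}_j$ (and their inverses) distorts angles. The central lemma I would establish first is an angle-distortion bound: for any nonzero $x\in\mathcal{H}_2$ and any $j$, both $(\bm{I}+\bm{\Delta}_j)x$ and $(\bm{I}+\bm{\Delta}_j)^{-1}x$ make an angle at most $\arcsin r_j$ with $x$. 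This follows directly from $\|\bm{\Delta}_j x\|_2\le r_j\|x\|_2$ together with the definition of $\mathcal{S}(\cdot,\cdot)$ and its angular reformulation in Property~\ref{prope2} (for the inverse factors one applies the bound to $x'=(\bm{I}+\bm{\Delta}_j)^{-1}x$, which is the viewpoint of Property~\ref{prope3}). Chaining these bounds through the triangle inequality for $\theta(\cdot,\cdot)$ then gives, for any $m\in\mathcal{G}_P$,
$$\theta\!\left(m,\ \textstyle\prod_{j=1}^{k}(\bm{I}+\bm{\Delta}_{k+1-j})\,m\right)\le \sum_{j=1}^{k}\arcsin r_j,$$
and the analogous controller-side bound $\sum_{j=k+1}^{l}\arcsin r_j$. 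By Property~\ref{prope2} these show $\mathcal{G}_{\bm{P}^{e}_k}\subset\mathcal{S}(\mathcal{G}_P,\rho_p)$ and $\mathcal{G}'_{\bm{C}^{e}_k}\subset\mathcal{S}(\mathcal{G}'_C,\rho_c)$ with $\arcsin\rho_p=\sum_{j\le k}\arcsin r_j$ and $\arcsin\rho_c=\sum_{j>k}\arcsin r_j$.

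For \emph{sufficiency}, assuming \eqref{eq:two-port_arcsin}, these inclusions give $\arcsin\rho_p+\arcsin\rho_c=\sum_{k=1}^{l}\arcsin r_k<\arcsin b_{P,C}$ for every $k$, so the implication $3)\Rightarrow 1)$ of Theorem~\ref{thm:extend_arcsin} yields stability of each $[\bm{P}^{e}_k,\bm{C}^{e}_k]$, and Proposition~\ref{prop:stability} concludes. Note all partial sums stay below $\pi/2$, so the radii $\rho_p,\rho_c$ are well defined in $[0,1)$.

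For \emph{necessity} I would argue the contrapositive. Suppose $\sum_{k=1}^{l}\arcsin r_k\ge\arcsin b_{P,C}$. Recall from \cite{qiu1992feedback} that $b_{P,C}=\inf_{m\in\mathcal{G}_P,\,n\in\mathcal{G}'_C}\sin\theta(m,n)$, so there are (nearly) optimal nonzero directions $m^*\in\mathcal{G}_P$, $n^*\in\mathcal{G}'_C$ with $\theta(m^*,n^*)=\arcsin b_{P,C}$. Taking $k=l$, so that the controller is isolated, I would construct plant-side perturbations $\{\bm{\Delta}_k\}_{k=1}^{l}$, each a rotation within $\mathrm{span}\{m^*,n^*\}$ by an angle $\phi_k\le\arcsin r_k$ with $\sum_{k=1}^{l}\phi_k=\theta(m^*,n^*)$; such a split exists precisely because $\sum_k\arcsin r_k\ge\theta(m^*,n^*)$, and each rotation is admissible because the largest angle attainable under $\|\bm{\Delta}_k x\|_2\le r_k\|x\|_2$ is exactly $\arcsin r_k$. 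The composition then carries $m^*$ onto the ray through $n^*$, so (up to scaling) $n^*\in\mathcal{G}_{\bm{P}^{e}_l}\cap\mathcal{G}'_C$. A nonzero common point forces $b_{\bm{P}^{e}_l,\bm{C}^{e}_l}=0$ and the parallel projection to be unbounded, so $[\bm{P}^{e}_l,\bm{C}^{e}_l]$, and hence the NCS, is not finite-gain stable.

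The routine parts are the angle-distortion lemma and the sufficiency reduction. The main obstacle is the necessity construction: one must exhibit genuine admissible perturbation operators $\bm{\Delta}_k$ — causal and with $\|\bm{\Delta}_k\|\le r_k$ on all of $\mathcal{H}_2$, not merely on the single chain of rotated vectors — whose composition performs the prescribed rotation, rather than only asserting that some graph lies inside the cone. This requires the minimal angle $\arcsin b_{P,C}$ between $\mathcal{G}_P$ and $\mathcal{G}'_C$ to be attained, or approached, by concrete $\mathcal{H}_2$ signals $m^*,n^*$, and the worst-case rotation to be implemented by a causal operator, as in the construction underlying the ``arcsin'' theorem of \cite{qiu1992feedback}; a limiting argument would handle the boundary case in which the optimal angle is only approached.
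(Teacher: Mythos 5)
Your sufficiency argument is correct and is, in substance, the paper's own proof: the angle-distortion bound for $(\bm{I}+\bm{\Delta}_j)x$ and $(\bm{I}+\bm{\Delta}_j)^{-1}x$ followed by triangle-inequality chaining is exactly the paper's combination of the inclusions $\mathcal{G}_{\bm{P}^e_j}\subset\mathcal{S}(\mathcal{G}_{\bm{P}^e_{j-1}},r_j)$ (obtained via Properties \ref{prope2} and \ref{prope3}) with Lemma \ref{lem:cascaded}, and your conclusion via the implication $3)\Rightarrow 1)$ of Theorem \ref{thm:extend_arcsin} plus Proposition \ref{prop:stability} is verbatim the paper's.

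The necessity half is where you depart from the paper, and it contains a genuine gap. The paper disposes of necessity in one line: the LTI two-port perturbations of \cite{di2016cdc} form a subclass of the admissible $\{\bm{\Delta}_k\}$ whose graphs lie inside the conelike sets, so if \eqref{eq:two-port_arcsin} fails, a destabilizing perturbation already exists among LTI ones by the ``arcsin'' theorem proved there. Your hand-made construction, by contrast, breaks down at three points. (i) A pure rotation by $\phi_k$ is not admissible: writing $R_{\phi_k}=\bm{I}+\bm{\Delta}_k$ forces $\|\bm{\Delta}_k\|=2\sin(\phi_k/2)>\sin\phi_k$, so rotating by $\arcsin r_k$ violates $\|\bm{\Delta}_k\|\le r_k$. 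The extremal admissible map is a rotation composed with a contraction, $x\mapsto\cos\phi_k\,R_{\phi_k}x$; this is repairable, since the graphs of the LTI $P$ and $C$ are subspaces and the scaling is harmless, but your stated justification (``the largest angle attainable \ldots is exactly $\arcsin r_k$'') conflates attainable angle with admissibility of the rotation itself. (ii) More seriously, operators defined through their action on $\mathrm{span}\{m^*,n^*\}$ are built from inner products against fixed $\mathcal{H}_2$ signals and are in general not causal, whereas the theorem quantifies over stable $\bm{\Delta}_k$, and stability presupposes causality (Definition 2). Producing a \emph{causal} worst-case perturbation is precisely the hard content of the constructions in \cite{qiu1992feedback} and \cite{di2016cdc}; you flag this obstacle and defer it, so the blind proof is incomplete exactly at its crux. (iii) The boundary case $\sum_k\arcsin r_k=\arcsin b_{P,C}$ with the infimum defining $b_{P,C}$ not attained is not covered by your limiting argument: there every pair satisfies $\theta(m^*,n^*)>\sum_k\arcsin r_k$, so no admissible family can rotate any $m^*\in\mathcal{G}_P$ onto the ray of any $n^*\in\mathcal{G}'_C$, and instability must instead be exhibited by driving the angle between the perturbed graphs to zero without intersection --- a genuinely different argument. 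The clean repair is the paper's route: restrict to LTI perturbations and inherit necessity from \cite{di2016cdc}.
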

From the above theorem, we know the stability margin $b_{P,C}$ is the same as that in a standard closed-loop system with ``gap'' uncertainties \cite{zhou1998essentials,qiu1992feedback, qiu1992pointwise}, hence the synthesis problem of a two-port NCS can be solved by an $\mathcal{H}_\infty$ optimization. In addition, the synthesis is irrelevant to detailed requirements of communication channels between the plant and controller, such as the number of two-port connections and how the uncertainty bounds are distributed among all the channels, which provide more flexibility on the selection of the communication channels.

Before proceeding to the proof of Theorem \ref{thm:main}, we introduce a useful lemma.

\begin{lemma}\label{lem:cascaded}
Given $r_1,r_2 \in (0,1)$ and \textcolor{black}{a closed conelike neighborhood} $\mathcal{M} \subset \mathcal{H}_2$, it holds that
$$\mathcal{S}(\mathcal{S}(\mathcal{M},r_1),r_2)\subset\mathcal{S}(\mathcal{M},\sin(\arcsin r_1 + \arcsin r_2)).$$
\end{lemma}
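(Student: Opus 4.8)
The plan is to reduce this ``nested cone'' containment to a statement purely about angles, using the angular characterization of conelike neighborhoods in Property~\ref{prope2}. Recall that $\mathcal{S}(\mathcal{M},r) = \{n : \inf_{m\in\mathcal{M}} \theta(m,n) \le \arcsin r\}\cup\{0\}$, so the radius $r$ really encodes the angular aperture $\arcsin r$. The key observation is that $\sin(\arcsin r_1 + \arcsin r_2)$ is precisely the sine of the \emph{sum} of the two apertures, so the lemma is morally a triangle-inequality statement: if a point $n$ is within angle $\arcsin r_2$ of the inner cone $\mathcal{S}(\mathcal{M},r_1)$, and every point of that inner cone is within angle $\arcsin r_1$ of $\mathcal{M}$, then $n$ should be within angle $\arcsin r_1 + \arcsin r_2$ of $\mathcal{M}$.

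First I would take an arbitrary nonzero $n \in \mathcal{S}(\mathcal{S}(\mathcal{M},r_1),r_2)$ (the $0$ case is trivial) and, via Property~\ref{prope2} applied to the outer cone, produce a point $p \in \mathcal{S}(\mathcal{M},r_1)$ with $\theta(p,n) \le \arcsin r_2$. Since the system graphs are assumed closed and $\mathcal{M}$ is a closed conelike neighborhood, I expect the relevant infima to be attained, so I can actually choose $p$ achieving (or approximating) the minimizing angle. Next, applying Property~\ref{prope2} again to $p$ relative to the inner cone, I obtain $m \in \mathcal{M}$ with $\theta(m,p) \le \arcsin r_1$. Then the triangle inequality for the angular metric $\theta(\cdot,\cdot)$ on the Hilbert space $\mathcal{H}_2$ gives
\begin{align*}
\theta(m,n) \le \theta(m,p) + \theta(p,n) \le \arcsin r_1 + \arcsin r_2.
\end{align*}
This shows $\inf_{m\in\mathcal{M}}\theta(m,n) \le \arcsin r_1 + \arcsin r_2$, and by Property~\ref{prope2} (now read in the other direction, with radius $\sin(\arcsin r_1 + \arcsin r_2)$) this is exactly the assertion that $n \in \mathcal{S}(\mathcal{M},\sin(\arcsin r_1+\arcsin r_2))$.

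The main obstacle I anticipate is twofold. First, one must be careful that $\arcsin r_1 + \arcsin r_2$ may exceed $\pi/2$, in which case $\sin(\arcsin r_1 + \arcsin r_2)$ is no longer monotone and the angular characterization of Property~\ref{prope2} must be interpreted with care; I would handle this by noting that if the aperture sum is at least $\pi/2$ then the outer cone is essentially all of $\mathcal{H}_2$ and the containment is vacuous, so the interesting regime is $\arcsin r_1 + \arcsin r_2 < \pi/2$ where everything is well-behaved. Second, and more delicately, $\theta$ is defined as the \emph{acute} angle between vectors, whereas the cone $\mathcal{S}(\mathcal{M},r)$ is symmetric under sign flips (Property~\ref{prope1}); I must therefore verify that the triangle inequality I invoke is the correct one for this acute-angle metric, replacing $p$ by $-p$ or $m$ by $-m$ as needed so that the chained angles genuinely add rather than cancel. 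Establishing that $\theta(\cdot,\cdot)$ restricted to this sign-collapsed setting is a genuine metric satisfying the triangle inequality is the crux of the argument; once that is in hand, the rest is a direct two-step chaining.
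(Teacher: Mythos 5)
Your core argument is the same as the paper's: convert the nested-cone statement into angles via Property~\ref{prope2}, pick a minimizing point in the inner cone $\mathcal{S}(\mathcal{M},r_1)$ (attainment justified by closedness), and chain two applications of the triangle inequality for the acute-angle quantity $\theta(\cdot,\cdot)$. That part is correct, and your explicit concern that $\theta$ is the \emph{acute} (sign-collapsed) angle, so the triangle inequality must be the one for the projective-space metric, is a legitimate point the paper uses silently in both inequality~(\ref{ineq:angles1}) and inequality~(\ref{ineq:angles2}).

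However, your proposed handling of the regime $\arcsin r_1 + \arcsin r_2 \ge \pi/2$ is backwards. In that regime the nested cone $\mathcal{S}(\mathcal{S}(\mathcal{M},r_1),r_2)$ does swallow all of $\mathcal{H}_2$ (every vector lies within acute angle $\pi/2$ of $\mathcal{M}$), but the \emph{target} cone $\mathcal{S}(\mathcal{M},\sin(\arcsin r_1+\arcsin r_2))$ has radius $\sin(\arcsin r_1+\arcsin r_2)<1$ once the sum strictly exceeds $\pi/2$, hence angular aperture $\pi-(\arcsin r_1+\arcsin r_2)<\pi/2$, and is a proper subset of $\mathcal{H}_2$. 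So the containment is not ``vacuous'' there --- it genuinely fails: e.g.\ in $\mathbb{R}^2$ with $\mathcal{M}$ a thin closed cone about a line and both apertures equal to $50^\circ$, a vector orthogonal to that line lies in the nested cone but not in the target cone, whose aperture is only about $80^\circ$. The honest repair is to add the hypothesis $\arcsin r_1+\arcsin r_2\le\pi/2$ to the lemma; note that the paper's own proof carries the same unstated restriction, since its last step tacitly uses $\arcsin(\sin(\arcsin r_1+\arcsin r_2))=\arcsin r_1+\arcsin r_2$. This hypothesis is harmless in context, because wherever the lemma is invoked (the induction in the proof of Theorem~\ref{thm:main}) the accumulated sums are bounded by $\arcsin b_{P,C}\le\pi/2$.
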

\begin{proof}
Let $\mathcal{M}_1 = \mathcal{S}(\mathcal{M},r_1)$. Let $m \in \mathcal{M}$, $m_1 \in \mathcal{M}_1$ and $m_2 \in \mathcal{S}(\mathcal{S}(\mathcal{M},r_1),r_2)$. Then we have
\begin{align}\label{ineq:angles2}
  \theta(m_2,m) \leq \theta(m_2,m_1) + \theta(m_1,m).
\end{align}
Particularly, take $\bar{m}_1 \in \arg\min_{m_1 \in \mathcal{M}_1} \theta(m_2,m_1)$. Then inequality (\ref{ineq:angles2}) implies that
\begin{align*}
  \theta(m_2,m) &\leq \theta(m_2,\bar{m}_1) + \theta(\bar{m}_1,m) \\
  &\leq \arcsin r_2 + \theta(\bar{m}_1,m).
\end{align*}
Taking infimum at the both sides brings about that
\begin{align*}
\inf_{m \in \mathcal{M}}\theta(m_2,m) &\leq \arcsin r_2 + \inf_{m \in \mathcal{M}}\theta(\bar{m}_1,m) \\
&\leq \arcsin r_1 + \arcsin r_2.
\end{align*}

Hence, $m_2 \in \mathcal{S}(\mathcal{M},\sin(\arcsin r_1 + \arcsin r_2)),$ which completes the proof.
\end{proof}
\textcolor{black}{The above lemma characterizes the inclusion relations of conelike sets.} The proof of Theorem
\ref{thm:main} is given next.
%
%

\begin{proof}[Proof of Theorem~\ref{thm:main}]
\textcolor{black}{The necessity follows from the ``arcsin'' theorem in \cite{di2016cdc} for LTI systems by
  noting that that the linear two-port neighborhood $\mathcal{N}(\mathcal{G}_{{P}},r)$ is contained in the conelike set
  $\mathcal{S}(\mathcal{G}_{{P}},r)$.}

Next we prove the sufficiency.
Assume we are at the $k$-th stage of equivalent closed-loop system as shown in Fig.~\ref{fig:equiv_con_pla}. Let $\mathcal{M} = \mathcal{G}_{\bm{P}}$ and $\mathcal{M}^e_j = \mathcal{G}_{P^e_j}$, $j=1,2,\dots,l$. Then $$\mathcal{M}^e_{j}=\bmT\mathcal{M}^e_{j-1} = (\bmI+\bmD_j)\mathcal{M}^e_{j-1}$$ with $\|\bmD_j\|\leq r_j$. Let $ n \in \mathcal{M}^e_{j} \setminus \{0\}$, there exists an $m_1 \in \mathcal{M}^e_{j-1}$ such that $n=(\bmI+\bmD_j)m_1$. Hence we have
$$\inf_{0\neq m \in \mathcal{M}^e_{j-1}}\frac{\|n-m\|_2}{\|m\|_2} \leq \frac{\|\bmD_j m_1\|_2}{\|m_1\|_2} \leq \|\bmD_j\|\leq r_j.$$
As a result,
$\mathcal{M}_j^e \subset \mathcal{S}(\mathcal{M}^e_{j-1},r_j)$, $j=2,3,\dots,k$.

From Lemma \ref{lem:cascaded} and by induction, we have
$$\mathcal{M}^e_k \subset \mathcal{S}(\mathcal{M}^e_{k-1},r_k) \subset \dots \subset \mathcal{S}(\mathcal{M},\sin(\sum_{j=1}^{k} \arcsin r_j)).$$

\textcolor{black}{Likewise, for the controller part, let $\mathcal{N} = \mathcal{G}'_{\bm{C}}$ and $\mathcal{N}^e_j =
  \mathcal{G}'_{C^e_j}$.
  Then $$\mathcal{N}^e_{j-1} = \bmT^{-1}\mathcal{N}^e_{j} = (\bmI+\bmD_j)^{-1}\mathcal{N}^e_{j}$$ with $\|\bmD_j\|\leq r_j$. Given any
  $ n \in \mathcal{N}^e_{j-1}\setminus \{0\}$, there exists an $m_1 \in \mathcal{N}^e_{j}$ such that $n=(\bmI+\bmD_j)^{-1}m_1$. Hence we have}
$$\textcolor{black}{\inf_{0\neq m \in \mathcal{N}^e_{j}}\frac{\|n-m\|_2}{\|n\|_2} \leq \frac{\|\bmD_j n\|_2}{\|n\|_2} \leq \|\bmD_j\|\leq r_j.}$$
\textcolor{black}{By Property \ref{prope3}, we have $$\mathcal{N}^e_{j-1} \subset \tilde{\mathcal{S}}(\mathcal{N}^e_{j},r_j) = {\mathcal{S}}(\mathcal{N}^e_{j},r_j), ~j = k+1,\dots,l.$$
Hence by the same arguments as above, we have}
$$\mathcal{N}^e_k \subset \mathcal{S}(\mathcal{N},\sin( \sum_{j=k+1}^{l} \arcsin r_j)).$$

Therefore, from the stability condition (\ref{eq:two-port_arcsin}) and Theorem \ref{thm:extend_arcsin}, we know $[\bmP_k^e, \bmC_k^e]$ is stable for
every $k = 0,1,...,l$. Combining this with Proposition \ref{prop:stability}, we obtain the finite-gain stability of the two-port NCS.
\end{proof}
%
%
\subsection{Scalability of the Stability Condition}
When we are faced with a large-scale network with many relays and connections, a particular communication link between a plant and a controller may
involve many cascaded two-port channels. As the topology of an NCS changes, we need to confirm whether the new two-port communication link is
``healthy'' enough to keep the NCS robustly stable. Revaluating the whole network from the beginning may be impractical due to the limitations on
computational resources or responding time. In the following, we show this problem can be solved in the two-port NCS by defining the stability
residue properly.

For a two-port NCS with an LTI plant $P$ and LTI controller $C$ under nonlinear perturbations on its communication channels, define its
stability \textcolor{black}{residue} as
\begin{align}\label{eq:Rpc}
R_{P,C}(r_1,\dots,r_l):=\arcsin b_{P,C} - \sum_{k=1}^l \arcsin r_k,
\end{align}
which is subsequently written as $R_{P,C}$ without ambiguity.
It follows from Theorem \ref{thm:main} that the two-port NCS is stable for all stable uncertainties $\{\bm{\Delta}_k\}_{k=1}^{l}$ subject to $\|\bmD_k\| \leq r_k$ if and only if $R_{P,C}(r_1,\dots,r_l)>0$. It is no doubt that the larger $R_{P,C}$ is, the more robustly stable the NCS will be.

When some new two-port connections are added or some old ones are modified, checking whether the resulting NCS remains robustly stable becomes
necessary. For this purpose, one only needs to update the stability residue and check its feasibility.
\begin{itemize}
  \item When a new connection $\bmT_{\text{new}} = \bmI+\bmD_{\text{new}}$ satisfying $\|\bmD_{\text{new}}\|\leq r_{\text{new}}$ is added, let $$R^{\text{new}}_{P,C}\leftarrow R_{P,C}-\arcsin r_{\text{new}}.$$
  \item When an old connection $\bmT_{\text{old}} = \bmI+\bmD_{\text{old}}$ with $\|\bmD_{\text{old}}\|\leq r_{\text{old}}$ is changed to
    $\bmT_{\text{new}} = \bmI+\bmD_{\text{new}}$ satisfying $\|\bmD_{\text{new}}\|\leq r_{\text{new}}$, let
      $$R^{\text{new}}_{P,C}\leftarrow R_{P,C} - \arcsin r_{\text{new}}+\arcsin r_{\text{old}}.$$
\end{itemize}
It follows from Theorem \ref{thm:main} and Equation (\ref{eq:Rpc}) that the new NCS will be robustly stable if and only if $R^{\text{new}}_{P,C}>0$ after
sequentially updating $R_{P,C}$ with respect to all the changes. In other words, the stability condition given in Theorem \ref{thm:extend_arcsin} is
scalable as the network size is enlarged.

\section{Conclusion}\label{sec:conc}
We investigate networked robust stabilization problem concerning LTI systems perturbed by nonlinear uncertainties.
A special conelike uncertainty set is studied, which bridges the techniques of handling linear subspaces to those of handling nonlinear uncertainties
in cascaded two-port networks. \textcolor{black}{A necessary and sufficient stability condition is given in the form of an ``arcsin'' inequality,
  which is scalable when the size of the network is enlarged. As far as control synthesis is concerned, the problem can be solved through an
  $\mathcal{H}_\infty$ optimization of the closed-loop stability margin.}

\appendices
\section{Proofs of Properties \ref{prope1},~\ref{prope2} and \ref{prope3}}\label{app:pf_properties}
\begin{proof}
  \textcolor{black}{From the closedness of the conelike set $\mathcal{M} \subset \mathcal{H}_2$, we can replace ``inf'' with ``min'' in the
    definition of $\tilde{\mathcal{S}}(\mathcal{M},r)$ and ${\mathcal{S}}(\mathcal{M},r)$. Next we prove the properties in turns.}

\textcolor{black}{For Property \ref{prope1}, it suffices to show for $n \in \mathcal{S}(\mathcal{M},r)$, it holds $\alpha n \in \mathcal{S}(\mathcal{M},r)$ for every $\alpha \in \mathbb{R}$.
Using the definition, we have}
$$\inf_{0\neq m \in \mathcal{M}}\frac{\|\alpha n-m\|}{\|m\|} = \inf_{0\neq \alpha m \in \mathcal{M}}\frac{\|\alpha n-\alpha m\|}{\|\alpha m\|} \leq r,$$
which establishes Property \ref{prope1}.

For Property \ref{prope2}, let $n \in \mathcal{S}(\mathcal{M},r)$. It follows that
$$\bar{m} \in \argmin_{ m\in \mathcal{M}} \frac{\|n-m\|_2}{\|m\|_2}$$
\textcolor{black}{satisfies $n-\bar{m} \perp n$, whereby $\sin \theta(\bar{m},n) = \frac{\|n-\bar{m}\|_2}{\|\bar{m}\|_2} \leq r$.  Consequently,
  $\mathcal{S}(\mathcal{M},r) \subset \{n\in \mathcal{H}_2 : \min_{m\in \mathcal{M}} \theta(m,n)\leq \arcsin r\}\cup \{0\}$. On the other hand, let
  $n$ belongs to the latter set. From Property \ref{prope1}, we can find $\bar{m} \in \mathcal{S}(\mathcal{M},r)$ such that
  $\theta(\bar{m},n) = \arcsin r $ and $n-\bar{m} \perp n$, which implies that }
$$\frac{\|n-\bar{m}\|_2}{\|\bar{m}\|_2} = \sin \theta(\bar{m},n)\leq r.$$
This completes the proof for Property \ref{prope2}.

\textcolor{black}{For Property \ref{prope3}, given $n \in \tilde{\mathcal{S}}(\mathcal{M},r)$, consider}
$$\bar{m} \in \arg\min_{ m\in \mathcal{M}} \frac{\|n-m\|_2}{\|n\|_2}.$$
\textcolor{black}{It follows that $n-\bar{m} \perp \bar{m}$. Denote the acute angle between $n$ and $\bar{m}$ as $\theta_0 (\leq \arcsin r)$. In the hyperplane determined by $\bar{m}$ and $n$, as shown in Fig.~
\ref{fig:conelike}, we can extend $\bar{m}$ to $\bar{m}_1 \in \mathcal{M}$ along $\bar{m}$ such that $n-\bar{m}_1 \perp n$, which is guaranteed by Property \ref{prope1}. This implies that} $$\min_{0 \neq m\in \mathcal{M}} \frac{\|n-m\|_2}{\|m\|_2} \leq \frac{\|n-\bar{m}_1\|_2}{\|\bar{m}_1\|_2} = \sin \theta_0 \leq r.$$
Consequently, $\tilde{\mathcal{S}}(\mathcal{M},r)\subset {\mathcal{S}}(\mathcal{M},r)$.

\textcolor{black}{On the other hand, given $n \in {\mathcal{S}}(\mathcal{M},r)$ and consider}
$$\bar{m} \in \arg\min_{0 \neq m\in \mathcal{M}} \frac{\|n-m\|_2}{\|m\|_2},$$
\textcolor{black}{one can argue likewise that ${\mathcal{S}}(\mathcal{M},r)\subset \tilde{\mathcal{S}}(\mathcal{M},r)$, which completes the proof.}
\end{proof}

\begin{figure}
  \centering
  \includegraphics[width=.35\textwidth]{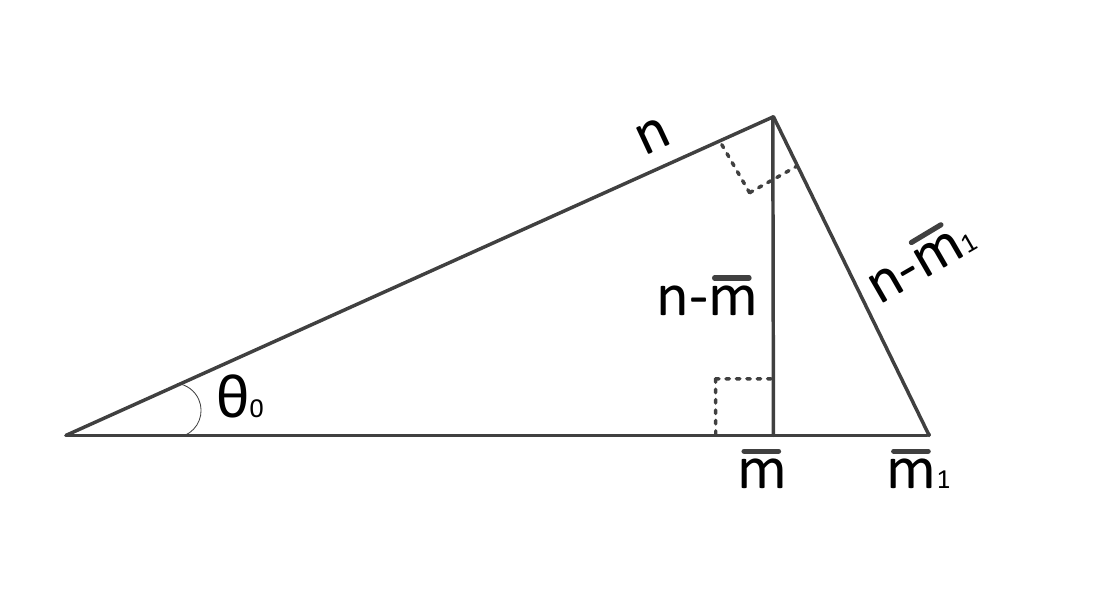}\\
  \caption{Illustration for Proof of Property \ref{prope3}}\label{fig:conelike}
\end{figure}

\section{Proof of Proposition \ref{prop:nonlinear_stability_criterion}}\label{app:pf_propo2}
\begin{proof}
  We prove both parts by contradiction. The proposition holds trivially when $[P,C]$ is unstable, thus it suffices to prove the case when $[P,C]$ is
  stable. For brevity, define $\mathcal{M}:=\mathcal{G}_{{P}}$, $\mathcal{N}:=\mathcal{G}'_{{C}}$, $\mathcal{M}_1:=\mathcal{G}_{\bm{P}_1}$ and
  $\mathcal{N}_1:=\mathcal{G}'_{\bm{C}_1}$.

\textbf{Sufficiency:}

Let $\mathcal{G}_{\bm{P}_1}\subset \mathcal{S}(\mathcal{G}_{{P}},r_p),~\mathcal{G}'_{\bm{C}_1}\subset \mathcal{S}(\mathcal{G}'_{{C}},r_c)$ and
$[\bmP_1,\bmC_1]$ be unstable. It follows that $\Pi_{\mathcal{M}_1\sslash\mathcal{N}_1}$ is unbounded. That is to say, there exists a sequence
$\{\omega_k\}_{k=1}^{\infty} \subset \mathcal{H}_2\setminus \{0\}$, such that:
\begin{itemize}
  \item $\|\omega_k\|_2 \nearrow \infty$;
  \item $\displaystyle\lim _{k \rightarrow \infty}\frac{\|\Pi_{\mathcal{M}_1\sslash\mathcal{N}_1} \omega_k\|_2}{\|\omega_k\|_2} = \infty$.
\end{itemize}

By the surjectivity of $\bmF_{\bm{P}_1,\bm{C}_1}$, we know that
$$\omega_k = \Pi_{\mathcal{M}_1\sslash\mathcal{N}_1}\omega_k + \Pi_{\mathcal{N}_1\sslash\mathcal{M}_1}\omega_k=: m_k+n_k.$$ Hence,
$\alpha_k := \frac{\|\omega_k\|_2}{\|m_k\|_2} \rightarrow 0$ as $k \rightarrow \infty$.  From Definition \ref{def:wellposed}, we know
$m_k,n_k \in \mathcal{H}_2 \setminus \{0\}$, and thus the angle between them can be computed as
$$\theta(m_k,n_k)=\arccos \left|\frac{\langle m_k,n_k\rangle}{\|m_k\|_2\|n_k\|_2}\right|.$$

Consequently,
\begin{align*}
\cos \theta(m_k,n_k)& = \left|\frac{\langle m_k,n_k\rangle}{\|m_k\|_2\|n_k\|_2}\right| \\
& \geq \left|\frac{\langle m_k,n_k\rangle}{\|m_k\|_2(\|m_k\|_2+\|\omega_k\|_2)}\right| \\
& = \frac{1}{1+\alpha_k}\left(1 - \frac{|\langle m_k,\omega_k\rangle|}{\|m_k\|^2_2}\right)\\
& \geq \frac{1}{1+\alpha_k}\left(1-\frac{\|m_k\|_2\|\omega_k\|_2}{\|m_k\|^2_2}\right)\\
& = \frac{1-\alpha_k}{1+\alpha_k} \rightarrow 1 ~~~~\text{as} ~k \rightarrow \infty.
\end{align*}

Hence $\theta(m_k,n_k) \rightarrow 0$. Since $\mathcal{G}_{\bm{P}_1}$ and $\mathcal{G}'_{\bm{C}_1}$ are closed sets, it follows that
$\mathcal{G}_{\bm{P}_1} \cap \mathcal{G}'_{\bm{C}_1} \neq \{0\}$ and therefore
$\mathcal{S}(\mathcal{M},r_p)\cap \mathcal{S}(\mathcal{N},r_c) \neq \{0\}$, which leads to a contradiction.

\textbf{Necessity:}

Assume there exists a nonzero $u$ satisfying $u \in \mathcal{S}(\mathcal{M},r_p) \cap \mathcal{S}(\mathcal{N},r_c)$. From Property \ref{prope1} we know $\{\alpha u:\alpha \in \mathbb{R}\} \subset \mathcal{S}(\mathcal{M},r_p)\cap \mathcal{S}(\mathcal{N},r_c)$.
Construct two scalar sequences $\{\alpha_k\}_{k=1}^{\infty}$ and $\{\beta_k\}_{k=1}^{\infty} \subset \mathbb{R}$ such that
\begin{itemize}
  \item $|\alpha_k|,~|\beta_k|$ and $|\frac{\alpha_k}{\beta_k}| \nearrow \infty$;
  \item $\beta_k = \alpha_t+(\beta_l-\alpha_l)$ if and only if $k=t=l$.
\end{itemize}

Furthermore, construct two graphs
$\mathcal{M}_1:=\mathcal{G}_{\bm{P}_1}\subset \mathcal{S}(\mathcal{M},r_p),~\mathcal{N}_1:=\mathcal{G}'_{\bm{C}_1}\subset
\mathcal{S}(\mathcal{N},r_c)$,
such that $\{\alpha_k u\} \subset \mathcal{M}_1$, $\{(\beta_k-\alpha_k) u\} \subset \mathcal{N}_1$ and $\bmF_{\bm{P}_1,\bm{C}_1}$ is
surjective. Hence, for any $\omega_k = \beta_k u \in \mathcal{H}_2$, we have the decomposition
$$\omega_k = \beta_k u = \alpha_k u + (\beta_k-\alpha_k) u := m_k+n_k.$$
Moreover, $$\lim_{k \rightarrow \infty} \frac{\|m_k\|_2}{\|\omega_k\|_2} =  \lim_{k \rightarrow \infty}|\frac{\alpha_k}{\beta_k}| = \infty.$$
It follows directly that $\Pi_{\mathcal{M}_1\sslash\mathcal{N}_1}$ is unbounded, i.e. $[\bmP_1,\bmC_1]$ is unstable, which leads to a contradiction.
\end{proof}
\bibliographystyle{IEEEtran}

\bibliography{mybibfile_cdc2017}
\end{document}